\documentclass[lettersize,journal]{IEEEtran}
\usepackage{amsmath,amsfonts}
\usepackage{algorithmic}
\usepackage[ruled,linesnumbered,vlined]{algorithm2e}
\usepackage{float}
\usepackage{comment}
\usepackage{array}
\usepackage[caption=false,font=normalsize,labelfont=sf,textfont=sf]{subfig}
\usepackage{textcomp}
\usepackage{stfloats}
\usepackage{url}
\usepackage{verbatim}

\usepackage{graphicx}
\usepackage{cite}
\hyphenation{op-tical net-works semi-conduc-tor IEEE-Xplore}
\usepackage{tabularx,booktabs}
\usepackage{amsmath,amssymb,amsfonts}
\usepackage{algorithmic}
\usepackage{graphicx}
\usepackage{textcomp}
\usepackage{comment}
\usepackage{orcidlink}
\usepackage{forest}
\usepackage{soul}
\usepackage{amsthm}
\usepackage{pgf-umlsd}
\usepackage{tikz}
\usetikzlibrary{
	graphs,
}
\definecolor{bbord}{RGB}{70,78,82}     
\definecolor{ffill}{RGB}{171,213,238}  
\definecolor{eedııge}{RGB}{27,128,196}   
\definecolor{medge}{RGB}{96,15,36}     
\tikzset{st1/.style={blue,dashed, draw=blue, fill=yellow!40, minimum size=0.5cm}}
\tikzset{st2/.style={red,dashed, draw=red, fill=yellow!40, minimum size=0.5cm}}
\tikzset{st3/.style={black,dashed, draw=black, fill=yellow!40, minimum size=0.5cm}}

\newtheorem{theorem}{Theorem}[section]

\theoremstyle{proposition}
\newtheorem{proposition}[theorem]{Proposition}
\theoremstyle{corollary}
\newtheorem{corollary}[theorem]{Corollary}
\theoremstyle{definition}

\newtheorem{example}[theorem]{Example}

\usepackage{etoolbox}

\theoremstyle{remark}
\newtheorem{remark}[theorem]{Remark}

\usepackage[utf8]{inputenc}
\usepackage[T1]{fontenc}
\usepackage[english]{babel}
\usepackage{multicol}
\usepackage{tabularx}
\usepackage{multirow}
\newcolumntype{Y}{>{\centering\arraybackslash}X}

\ifCLASSINFOpdf

\else

\fi


\numberwithin{equation}{section}

\begin{document}

\title{ Group Authentication and \\Key Establishment Scheme}

\author{Sueda~Guzey,~\IEEEmembership{Graduate Student Member,~IEEE,}
	Gunes~Karabulut Kurt,~\IEEEmembership{Senior Member,~IEEE,}
	and~Enver~Ozdemir,~\IEEEmembership{Senior Member,~IEEE}
	\thanks{Sueda Guzey and Enver Ozdemir are with Informatics Institute, Istanbul Technical University, Istanbul, Turkey, e-mails:\{kayci19, ozdemiren\}@itu.edu.tr.}
	\thanks{G. Karabulut-Kurt is with  the Department of Electrical Engineering,  Polytechnique Montr\'eal, Montr\'eal, Canada, e-mail: gunes.kurt@polymtl.ca.}
	\thanks{}}




\maketitle

\begin{abstract}
	 Group authentication is a technique that verifies the group membership of multiple users and establishes a shared secret key among them. Unlike the conventional authentication schemes that rely on a central authority to authenticate each user individually, group authentication can perform the authentication process simultaneously for all the members who participate. Group authentication has been found to be a suitable candidate for various applications in crowded in Internet of Things (IoT) environments, such as swarms of drones for agriculture, military, surveillance, where a group of devices need to establish a secure authenticated communication channel among themselves.	The recently presented group authentication algorithms mainly exploit Lagrange polynomial interpolation along with elliptic curve groups over finite fields.  A polynomial interpolation-based group authentication scheme has a vulnerability that allows malicious interruption by any single entity in the process. Moreover, this scheme requires each entity to obtain the tokens of all other entities, which is impractical in a large-scale setting. The cost of authentication and key establishment also depends on the number of users, creating a scalability issue. As a fresh approach to eliminate these issues, this work suggests the use of  inner product spaces for group authentication and key establishment. The approach with linear spaces introduces a reduced computation and communication load to establish a common shared key among the group members.   In addition to providing lightweight authentication and key agreement, this  approach allows any user in a group to make a non-member a member, which is expected to be useful for autonomous systems in the future. The scheme  is designed in a way that the sponsors of such members  can  easily be recognized by anyone in the group. Unlike the other group authentication schemes based on Lagrange's polynomial interpolation, the proposed scheme doesn't provide a tool for adversaries  to compromise the whole group's secrets by using only a few members' shares as well as it allows to recognize a non-member easily, which prevents  the denial of service  attacks from which the former group authentication algorithms suffer. 

\end{abstract}

\begin{IEEEkeywords}
Elliptic curve cryptography, group authentication,  inner product, secret sharing schemes, vector spaces.
\end{IEEEkeywords}

\section{Introduction}
\IEEEPARstart Internet of things (IoT) networks are composed of lightweight sensors and actuators that can exchange sensing data for communication and computation purposes. From this aspect, they can be considered as a distributed cyber-physical system, and in such networks, the reliability of the sensing data is critical for the safety and reliability of the network. Hence, all of the network nodes should be authenticated so that only the authorized nodes can transmit and receive data \cite{demir2020garden}. The authentication is the process  of verifying the identity of an entity; in other words, it is  a process for deciding  whether  or not a  user is really  who it claims to be \cite{SSL,PGP}. Since it has a vital  role in  establishing secure communication and confidential data transmission,  an efficient, reliable, and scalable  authentication process is a necessity for IoT networks.    
In the presence of a densely populated network with a high number of devices connected to each other, as in an IoT network such as  massive machine type communication (mMTC) \cite{9993704}, the authentication task becomes a challenging process in terms of  the need for the storage, energy, communication, and computation power, that is yet to be solved. Aligning with the authentication process, the additional security measure, called  the secret key establishment among  users in a network, puts an unbearable computational  burden especially on  the resource constraint devices in case of employing standard cryptographic primitives. In order to reduce this high computational complexity of authentication and key agreement processes, group authentication schemes (GASs) have been presented as a solution instead of one-to-one authentication~\cite{Harn}. { A group authentication scheme is a potential candidate for environments that consist of devices that need to establish a secure channel among themselves, such as swarms of drones in dynamically changing environments.\\

\indent GAS is a process that has a similar purpose  as authentication algorithms; namely, a GAS is a process of confirming whether several users belong to a predetermined group. The first group authentication scheme was inspired by a threshold secret sharing scheme and used polynomial interpolation \cite{Harn}. It presented three different algorithms, each of which identified a user from its public information. The user’s private key was the image of a secret function chosen by the group manager. In other words, each user has a key pair $(x,f(x))$, where the first component is the public key and the second component is the private key. The constant term of $f(x)$ could be easily obtained by combining the public and private information of $\deg(f(x))+1$ or more users. When the users computation results in the constant term, the authentication phase is completed. The first group authentication algorithm updates the keys of users after each authentication process, but it is vulnerable to malicious behavior. In other words, an algorithm based on polynomial interpolation cannot identify the malicious party or parties if the multi-party computation fails to return the constant term of $f(x)$. The successor methods eliminated the need to update keys after each authentication process, but the problem of failing to complete the authentication process persisted when the process resulted in a value different from the predetermined constant term of the function. Therefore, the group authentication schemes based on polynomial interpolation are susceptible to denial of service attacks (DOS). In addition to this, they have a high communication load, as each member had to share keys with every other member who joined the authentication process.
	
Group authentication schemes  are a suitable alternative to sequential authentication methods, which require high energy consumption and computational power. GASs allow multiple users to authenticate themselves as a group with a single message. Even though many variations of GAS have been developed \cite{Harn,YA},  most of the existing GASs rely on Lagrange’s polynomial interpolation, which poses several challenges, such as: 
\begin{enumerate}[]
	\item Scalability: The algorithm becomes inefficient when the group size is large, as it requires more computations and communications.
	\item Communication overhead: The users need to exchange secret keys with each other, which increases the network traffic and the risk of leakage.
	\item Computational complexity: The security of the GASs depends on the hardness of the discrete logarithm problem on elliptic curves, which requires more arithmetic operations than on finite fields.
\end{enumerate}

In the following part, we  discuss these problems in more detail and propose a novel GAS that overcomes them.

\textit{Scalability}: Group authentication is a practical application that may involve a large number of users in a group. As more devices are connected to each other, group authentication and key-sharing methods may face difficulties. The existing algorithms have to balance between security and operational cost, and it is hard to find the optimal solution for each case. The main problem is that an adversary can compromise the group security by obtaining some members’ private information, which can be combined to reveal the secrets of all users in the group.  The current algorithms based on polynomial interpolation have a linear relationship between the cost of authentication and the number of users in the process. In this paper, we propose a novel algorithm that achieves scalability as the cost is independent of the number of users in the process and high security.\\
\indent \textit{Communication overhead}: In  GASs, which are inspired by Shamir's secret sharing algorithm \cite{SSS}, the public keys are random numbers, and private keys are  the images of these random numbers under the group manager's private function.  The main drawback of all these approaches is to  require several communication to construct the group secret. The group secret has to be obtained by each member of the group in order to communicate in a secure way among the members of the group. Each member, therefore, goes through the same process to have this secret key. This process  includes sharing secrets with  all other members. In addition to high communication costs, unfortunately, such a sharing operation  might give an opportunity to an eavesdropper to capture the group's secret.\\
\indent \textit{Computational complexity}:  In order to add an extra layer of security, recent methods utilize elliptic curves over  finite fields along with the polynomial interpolation \cite{YA}. The security of the system then would depend on the hardness assumption of discrete logarithm problem (DLP) in an elliptic curve group over a finite field. On the other hand, interpolating points in the Euclidean plane itself might be considered to be costly, let alone involving group operation in an elliptic curve group. A single addition or doubling in an elliptic curve group costs more than 12 multiplications in a finite field.   Even though these operations are acceptable in terms of energy usage and computational power at the beginning, frequent authentication or key establishment  might not be bearable in some situations where power constraint IoT devices involve.

The following issues in the first and the second generation group authentication schemes form a motivation to open a new avenue toward research in this direction:
\begin{enumerate}
	\item The earlier versions of the group authentication algorithms require the polynomial interpolation to extract the secret, which means each user needs to obtain a certain number of others share, which puts a communication burden on each user for authentication and secret key establishment.
	\item The requirement of other members' shares makes algorithms based on polynomial interpolation vulnerable to the denial of service (DOS) attack. In fact, any of the current group authentication schemes can not determine an intruder joining the authentication process.
	\item The whole group's security depends on the manager's secret polynomial which can be revealed by a certain number of members. In other words, the schemes based on polynomial interpolation are vulnerable to the well-known Sybil attacks.
	\item The cost of the key establishment and the authentication process also depends on the manager's secret polynomial. For example, if the degree is large, each member needs to accumulate a large number of members' shares which increases communication and computational load. In case a small degree polynomial is in charge, then the chance to exploit Sybil attacks  increases dramatically. 
	\item The final problem involves a group manager for the registration phase in the algorithms based on the polynomial interpolation. This obligation prevents the group authentication schemes from exploiting especially decentralized and autonomous systems.  
\end{enumerate}

In order to jointly address the aforementioned  motivations, the proposed  GAS offers a new mathematical tool for authentication and key establishment. The algorithm is based on the well-known inner product spaces and projection operation. The scheme offers a group authentication algorithm that is scalable, lightweight, and secure. It also minimizes the operations needed for key establishment and authentication in a group, regardless of its size. We implemented all algorithms using the same C++ library, PARI/GP\cite{PARI}, for high-precision operations. Our results show that the proposed approach has acceptable computational costs even with 10000 users. Thereby, it is suitable to be adapted by IoT devices with limited resources. In addition to these advancements, in this proposal, the private information of each device is independent of one another. In other words, even if an adversary obtains all members' private keys but one, the adversary can not extract any information about  that member's secret key.

\subsection{Advantages of the Proposed  Scheme:}
The proposed lightweight group key establishment scheme solely relies on inner product operations, which might require univariate polynomial arithmetic depending on the selected inner product space. Due to the nature of  inner product spaces, the proposed algorithm encompasses the following advantages that are not offered by well-known group authentication algorithms. These  advantages make the proposed scheme a likely candidate for   practical authentication and key establishment in communication systems.  
\paragraph*{Advantage 1} The group key establishment doesn't require having other members share for each individual in the group.  In other words, publicly known information released by any member of the group is enough to extract the group secret. In this way, the security risk coming from exchanging members' shares among peers is removed completely.  This eliminates the security risk and the communication overhead of exchanging shares among peers.
\paragraph*{Advantage 2} The secure communication among the members of a group first requires the authentication of members if a usual group authentication algorithm is employed. The key establishment of the proposed scheme is set up  in a way that a non-member can not continue exchanging data with the members.  Since an infiltrator or a  non-member can not extract the key, it eliminates the need for additional group membership confirmation.
\paragraph*{Advantage 3} The proposed scheme has a constant cost for key extraction and authentication, regardless of the group size. However, other group authentication algorithms that use polynomial interpolation depend on the number of users for authentication and constructing the group secret key. This can lead to costly operations in some cases \cite{Harn}. Even the scalable GAS proposed by Aydin et al. \cite{YA} still requires the function to be constructed with the users who participate in the authentication process.
\paragraph*{Advantage 4} The proposed method allows any member to add a non-member to the group, and the sponsor of the new peer is easily identifiable by the group. In contrast, other group authentication algorithms require a member to have the same privileges as the group manager to add a new member. This means that the member must know the function that the group manager initially chose. Anyone with this function can add any user to the group, but the sponsor of the new member remains unknown. The group authentication schemes that use polynomial interpolation also need a central authority or its equivalent to register a user.
\paragraph*{Advantage 5}  The security of group communication in a GAS that uses polynomial interpolation depends on the group secret function, which the group manager generates. However, this function can be exposed by combining the private information of some users. Thus, an adversary who has access to some users’ data can breach the security of the whole group. This also compromises the private keys of all members. In contrast, the proposed method protects the individual secrets of each member, even if an adversary obtains some private keys. Therefore, the adversary cannot access the other members’ private information.
\paragraph*{Advantage 6} The GASs based on the idea of secret sharing authenticates users by combining certain members' shares in case the group manager is not available. Only when all shares' are legitimate then the method confirms the users. In other words, even if one user is not legitimate, existing GASs cannot continue the authentication process, and the method can not pinpoint the illegitimate users in the group. {This  causes an interruption of the service even if a single adversary attacks to the group authentication process. The proposed method allows any member of the group to locate a non-member easily, and this prevents the group from  a DOS attack \cite{ddos}.  

The remaining part of the paper is organized as follows.  In Section II, related studies in the general area of authentication and, in particular, group authentication and mathematical background are presented. Section III is spared for describing the proposed method.   The security analysis is presented in Section IV. Running time comparison is given in Section V. Conclusion, and future plan are given in Section VI.     
\section{Background}

\indent The increase in the number of  communication nodes in the digital medium compels researchers to search for non-traditional methods in secure  communication \cite{IoTSoc}. The first constraint that stands out during the designing of secure protocols is the presence of devices with limited energy and computational resources.  Therefore, traditional methods employing a public key algorithm \cite{DH, ElGamal, RSA} for authentication and key agreement are no longer suitable due to several reasons: The first one is the requirement of high computational load during the implementation of public key algorithms and the second one is the requirement of  responding to all device’s requests separately. In addition, utilizing any of these public key algorithms in an authentication scheme also requires the presence of a  certification authority. Indeed, the certification is indispensable to prevent the well-known Man-in-the-Middle Attack \cite{maninthemiddle}. In addition to the high computational power need, the communication load with the certification authority might bring an unbearable   burden. Considering all these problems with the usual authentication schemes,  GAS might be the most suitable solution for the resource constraint devices.

One of the most useful tools for the purpose  of constructing a GAS is  a threshold scheme based on polynomial interpolation. The first study on the threshold secret sharing scheme was presented by  Shamir \cite{SSS}, where a secret is divided into a number of pairs to be distributed among the shareholders. 
The secret can only be recovered by anyone holding as many shares as the threshold value. An adversary can not obtain any information about the secret unless it has more shares than the threshold value.\\
\indent The work proposed by Harn \cite{Harn} exploits  the polynomial interpolation in the three different authentication schemes as in the case of the secret sharing algorithm.    In his first scheme, which is called a synchronous $(t,m,n)$-group authentication scheme,  a polynomial $f(x)$  of degree $t-1$ is selected, and the constant term of it, say  $s$, is set to be the secret group key by the group manager $(GM)$.  Then the $GM$  calculates the private key $f(x_i)$ and conveys it with the public key $x_i$ to  the corresponding user $U_i$ for each $i=1,\dots,n$. The number of users joining the authentication process is $m \ge t$ out of $n\ge m$ users.  To be authenticated,  each user releases its tokens, and once each user has other users' shares, they can compute the group secret via:
$$ s= \sum_{i=1}^m f(x_i) \prod_{r=1 r\not=i} ^{m} \frac{x_r}{x_i - x_r}.$$
This method is secure when private key sharing is done simultaneously; otherwise, any adversary can obtain the group key by having $t$ or more shares via constructing the polynomial with these shares. Harn has proposed the second scheme in case sharing is asynchronous. In the token generation phase, the  $GM$ selects $k$ random polynomials $f_{l}(x)$   having   degree  $t-1$ such that $kt>n-1$  for $l=1,2,\dots,k$.  Then, it sets the secret key for user $U_i$ as $f_{l}(x_i)$ for  $l=1,2,\dots,k$. The group manager selects random $w_j$ and $d_j$ such that  $$ s= \sum_{j=1}^{k} d_{j} f_{j}(w_j). $$ The group manager then broadcasts $w_j,d_j $ for    $j=1,2,\dots,k$.
For authentication,  each user $U_i$ computes\\
$$ c_i= \sum_{j=1}^{k} d_j f_{j}(x_j) \prod_{r=1 r\not=i} ^{m} \frac{w_j - x_r}{x_i - x_r}.  $$ 
then releases $c_i$.   Each user joining the authentication process adds up the released information as
$$ s'=  \sum_{r=1}^{t} c_r. $$ In the following step, the users verify whether the equality, $H(s')= H(s)$, holds, and then the authentication process is completed; that is,  all users have been authenticated. Note that all these operations take place in a finite field, and an adversary can not obtain any information about private tokens of users from $c_i$ or $w_i$.  However, this method only allows members to use their tokens only once. Harn proposed another authentication scheme that allows members to use their tokens multiple times \cite{Harn}.
In this third method, the group manager first selects two large prime $p$ and $q$ such that $q$ divides $p-1$, a generator $g$ of $GF(q)$ and two polynomials $f_l(x)$ for $l=1,2$ which  have degree $t-1$. $GM$ then selects random integers  $w_{i,j}$ , $d_{i,j}$ for $j=1,2$ and sets the secret $$s_i= g_i^{ \sum_{}^{}d_{i,j}f_j(w_{i,j}) }.$$ The randomly chosen integers $w_{i,j},d_{i,j}$ and the hash values $H(s_i)$ are made publicly known by $GM$. For group authentication each user $U_i$  computes  $$c_i=\sum_{}^{}d_{i,j}f_j(x_i)\prod_{r=1\\  ,i\not=j}^{m} \frac{w_{i,j}-x_r}{x_i-x_r}$$ via their tokens and then finds $e_i=g_i^{c_i}$ to share other users in the group. Once the users have all $e_i$ for $i=1,2,\dots,m$  each one computes 
$$ s_i'=\prod_{i=1}^{m} e_i ,$$ and check if $H(s_i)=H(s_i')$. If equality holds, authentication of all users in the process is done. Otherwise, there must be at least one user  who is not a group member. Note that an attacker can not obtain any information  about $c_i's$ by having $e_i's$ thanks to the hardness assumption of discrete logarithm problem in the multiplicative group $GF(q)$.
 
 All the methods summarized above have certain vulnerabilities which prevent them from being employed in practice, and a new method for group authentication has been introduced in \cite{YA}. The elliptic curve discrete logarithm problem (ECDLP) is utilized in this work to provide a certain  security level for the group authentication algorithm.  In the initialization phase, the group manager determines   a cyclic group $G$, a generator $P$ for it, an encryption and a decryption algorithms $E(\cdot)$, $D(\cdot)$, and a hash function $H(\cdot)$.  $GM$ also selects a  polynomial $f(x)$ of degree $t-1$ whose constant term is the master secret $s$. Each user, $U_i$ for $i=1,2,\dots,n$ in the group, has one public information $x_i$ and one private information $f(x_i)$. Lastly, the group manager computes the value $Q=sP$ and makes  $P,Q,E(\cdot),D(\cdot),H(\cdot),H(s)$ and $x_i$'s publicly known and shares $f(x_i)$ with the user $U_i$ privately.  For $GM$ handling authentication, each user  $U_i$ computes $f(x_i)P$ and sends it to $GM$ in the group by concatenating this information with its identification number. This  prevents the public share $f(x_i)P$ from being used by any other user in future communications. If the $GM$ is responsible for the authentication part, it computes $f(x_i)P$ for each user $U_i$ and compares results with the received values. In case all are valid, the verification is completed successfully. Otherwise, the users who are not group members can be determined by $GM$. If $GM$ is not included in the verification phase, any user collects $f(x_i)P$ from others in the group  and can handle authentication by computing $$C_i=\left(\prod_{r=1,r\not=i}^{m} \frac{-x_r}{x_i - x_r}\right)f(x_i)P.$$ This verifier node checks whether $$\sum_{i=1}^{m}C_i \overset{?}=Q .$$ The authentication succeeds if the equation is satisfied. Otherwise, one or more non-members attempt to join the authentication phase, but the intruder(s) cannot be identified. We should note that, as in Harn’s group authentication method, it is impossible to determine which user or users are not in the group.\\
\indent There are several other studies on authentication based on Shamir's secret-sharing algorithms. One of them is a selective group authentication scheme for IoT-based medical information systems\cite{healtcaresystem}. This proposal aims to solve the security problems in healthcare services  such as  misuse of medical devices or illegal access to a medical service. For this purpose, a  group authentication scheme using Shamir's threshold technique is presented. Still, it is not suitable for  resource-limited devices since  a lot of communication is needed, even for a single user to be authenticated. Another secret sharing-based group authentication study is \cite{graycode}. This work uses the Gray code to construct the shares and the XOR operation to reconstruct the secret. This differs from conventional secret-sharing studies, which do not specify how to share the key among group members. Moreover, the proposed key establishment scheme only works for groups of 3 or 7 members. The protocol in \cite{vandermonde} employs a linear secret sharing scheme using the Vandermonde matrix  instead of the classic version  to distribute  pairs of the group key.  The purpose of this work is to reduce the computation load of the group authentication phase for energy-constrained IoT devices.

There is also another mathematical approach to group authentication \cite{CRT}. This proposal  is based on the Chinese remainder theorem (CRT). That is if any user has shadows up to $r$, the secret value $y$ can be computed using CRT. Another group authentication study that uses the Paillier threshold cryptography as  a tool  is proposed in \cite{mahallethreshold}. They have compared the running time of their work with Harn's algorithm and showed that their experimental results are better than Harn's work. However, there is a point to note here that they did not count the cost of public and private key encryptions, and the scalability issue is not considered. Apart from mathematical-based algorithms, various other types of algorithms have been proposed for authentication. A machine learning tool along with biometrics has been proposed to perform authentication in IoT networks \cite{ML}. The method asks users to have a certain share to be authenticated, and it is only suitable for small-size groups. An authentication method for dynamic groups has been investigated in \cite{dynamic}. The method requires aggregation of users' shares to perform authentication in machine-type communication. A lightweight authentication method is presented especially for machine-to-machine communication in \cite{24}. Each user performs computations to obtain their authentication codes, which convey to the group manager, and the group manager authenticates users based on the received codes. 

 A recent group authentication scheme is presented in  \cite{gase} and it is based on secret sharing scheme which relays on polynomial interpolation. The work seeks to enhance the security of EC systems by shifting data processing closer to the data source, departing from the conventional centralized approach.  Involving several edge servers increases the possibility of edge-servers compromise. For this purpose, this work proposes a group authentication scheme based on secret sharing providing a way for refreshing the session keys. However, utilizing secret sharing brings some problems with itself such as being vulnerable to DOS attacks, not being able to detect illegitimate users participating in the authentication process, and causing communication overhead during the sharing phase between users. There is another group-based authentication scheme for machine type communication (MTC) in LTE-A networks \cite{MTC}.  Similar with \cite{gase}, the secret sharing method is utilized in authentication process and the same problems mentioned above might arise.\\ 
\indent A group authentication scheme utilizing historical collaboration process information is presented in \cite{collaboration}. The scheme leverages knowledge acquired during the previous collaboration round to generate tokens for mutual authentication among all devices before the next collaboration round commences. This makes the current authentication session depent on the previous one.
	 The another study \cite{drones} proposes a group authentication protocol for drones systems. This authentication and key agreement protocol aims to provide a secure way for data transmission  between swarms and Zone Service Providers (ZSPs) over an insecure communication channel. The algorithm employed in this study incorporates bitwise XOR, hashing, and PUF operations, rendering it lightweight. However, in scenarios involving ZSPs situated in diverse environments, this protocol may not be suitable for authentication. In simpler terms, this protocol does not facilitate cross-domain group authentication, wherein drones authenticate ZSPs located in disparate physical spaces.
For addressing the issue of cross-domain group authentication, the study \cite{cross_domain} has employed blockchain techniques. Specifically, the authentication scheme proposed in this paper relies on cooperative blockchains (BCs), including intra-domain and inter-domain BCs. The intra-domain BC is responsible for recording legitimate users’ registration and authentication information within a single domain, while the inter-domain BC serves the same purpose for cross-domain authentication. On the other hand, these blockchain-based security protocols necessitate frequent updates to the cryptographic information stored within the blockchain. This results in a substantial communication and computation overhead. 

The algorithm in this work exploits the inner product on a vector space. For the sake of completeness, a brief summary of the inner product and orthogonal projection is presented in the following subsections. 

\subsection{Inner Product Space}
Let $E$ be a vector space over a field $\mathbb F$. The distance between two vectors in $E$ is measured through a function called {\it inner product.} An inner product function, $\langle, \rangle$ maps two vectors $v,w$ to an element $\mathfrak f \in \mathbb F$:
  $$ \langle \cdot, \cdot \rangle =E\times E \rightarrow \mathbb F$$
An inner product function must satisfy:
\begin{itemize}
	\item {\it Linearity:} $\langle \alpha v_1+\beta v_2, v_3\rangle= \alpha \langle v_1,v_3\rangle +\beta  \langle v_2,v_3\rangle$ for all $v_1,v_2,v_3\in E$ and $\alpha, \beta \in \mathbb F$. 
	\item {\it Symmetry:} $\langle v_1,v_2\rangle=\langle v_2,v_1\rangle$ for all $v_1,v_2\in E$.
	\item {Positive Definiteness:} $\langle v,v \rangle>0$ if $v\ne 0 \in E$.
\end{itemize}

\subsubsection{Orthogonal Projection}	 
\indent  The approximation problem has been at the center of interest for applied sciences \cite{eigenfaces}, \cite{orthogonalprojection} for centuries. Let $\mathbb E$ be as above and assume for a moment that $\mathbb E$ is  the space of all continuous functions over the real numbers.   For any element $h$ in $\mathbb E$, the important problem is  to find a polynomial or a trigonometric function $g$ which is the closest to $h$. To give an answer for this, one first need to define what the closest means or define a distance function.  A distance function might be constructed easily for the inner product spaces.  Upon defining a proper way to measure distance, the closest function can be constructed by exploiting the inner product  defined on $\mathbb E$ \cite[Section 6.9]{Kincaid}. The method is based on the fact that for a given subspace $P$ of $E$, $g\in P$ is} the best approximation to $h$ if and only of $h-g$ is perpendicular to all vectors in $\mathbb P$, as depicted in Fig. \ref{fig1}.

\begin{figure}[tb] \label{vector projection}

	\centering
	\includegraphics[width=0.5\textwidth]{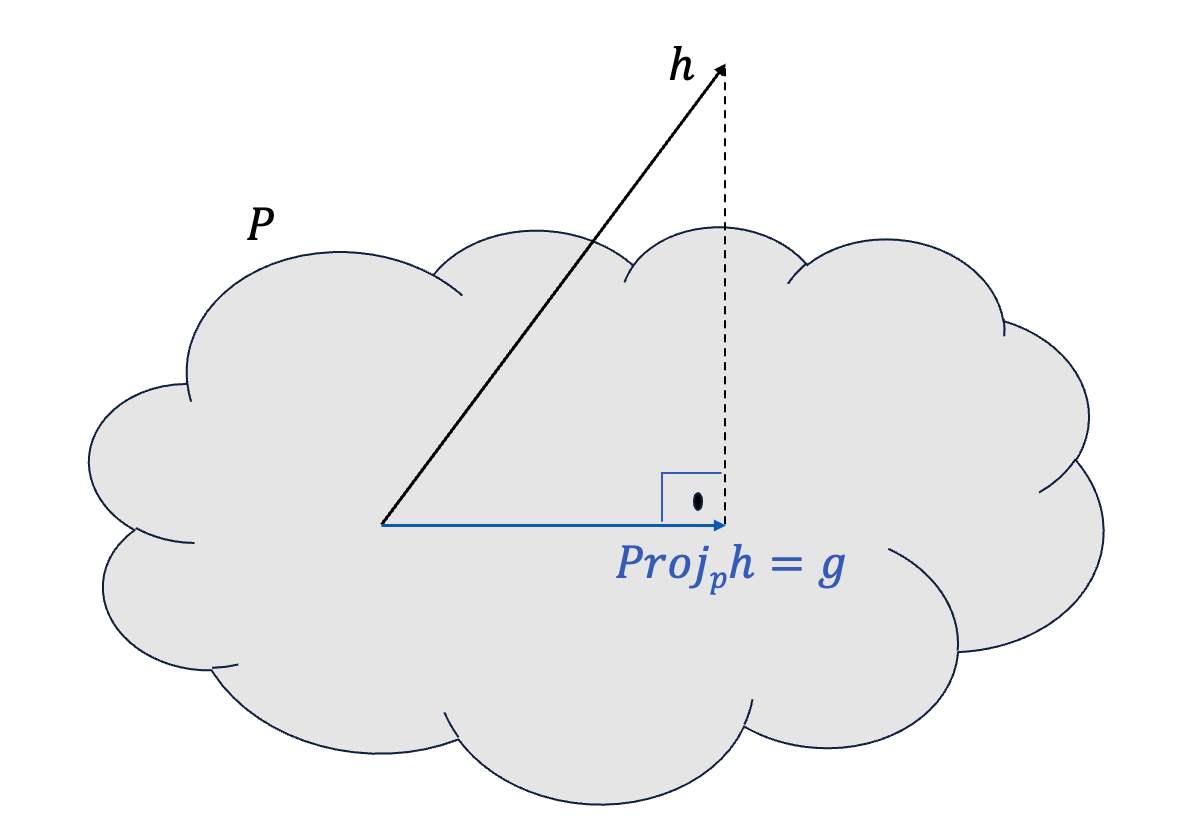}
	\caption{ The best approximation to $h$ is the projection vector.} \label{fig1} 
\end{figure}

In order to find $g\in P$ such that $g-h$ is orthogonal to $P$, one first needs a basis $S=\{b_1,\dots, b_n\}$ for $P$. Then locating the best approximation $g$ is just a computational task.  In fact $g$ can be written as
$$g=a_1b_1+\dots+a_nb_n$$ where $a_i$'s are in the base field for $i=1,\dots, n$. Solving the following system of linear equations for $a_1,\dots, a_n$ gives $g$:
$$ 
\begin{array}{ccc}
	\langle h-g,b_1\rangle&=&0\\
	\langle h-g,b_2\rangle&=&0\\
	\cdot &\cdot &\cdot\\
	\cdot &\cdot &\cdot\\
	\cdot &\cdot &\cdot\\
	\langle h-g,b_n\rangle&=&0.\\
\end{array}
$$      
The other way, or sometimes the easiest way, is to convert the basis $S$ to an orthonormal basis  $S'=\{b'_1,\dots,b'_n\}$  via Gram-Schmidt orthogonalization process. Then
$$c_i=\langle h,b'_i\rangle $$
where   $g=c_1b'_1+\dots+c_nb'_n.$
The computational load in this way occurs while converting $S$ to $S'$ via Gram-Schmidt orthogonalization method \cite{GS}. The cost of operations also depend on the selected inner product space and the inner product itself.

\begin{example}
	A suitable universal set is   a function space $\mathbb E$. An inner product on $\mathbb E$ can be chosen to be the standard one as:
	$$\langle h(x),g(x) \rangle =\int_0^1 h(x)g(x) dx\text{ for all } h(x),g(x)\in \mathbb E. $$ 
	A subspace $\mathbb W$ of $\mathbb E$ can be selected as the set of all polynomials of degree $\le 4$. The proposed algorithm distributes random elements in the selected subspace to the members of the group. For $\mathbb W$, the algorithm selects 5 random number $a_0,a_1,a_2,a_3,a_4$ in the base field $\mathbb F$ and set $$p(x)=a_4x^4+a_3x^3+a_2x^2+a_1x+a_0$$ as the random element in $\mathbb W$.
\end{example}

\section{Proposed Method}
\indent In this work, the inner product space is exploited for a group authentication and key establishment scheme. An  inner product space $E$ is the main object in our scheme. The idea emerged from a realization that a finite-dimensional subspace of a vector space $E$ has infinitely many bases and once a user has any basis for the subspace, it will be privileged as the others having a basis for the same subspace. The secrets are constructed with the predefined subspace, and  apart from group members, no one else can construct the group secret. Once the initial distribution of basis to group members is completed, the group members can construct the secure key and establish a confidential channel. Moreover, the members can privately exchange data with the group manager and another peer in the group.\\

Here we provide an exemplary scenario:
Although the group authentication scheme can be applied to various environments, the current specific use case involves swarms of drones. For instance, a group of drones can be used for surveilling gas pipelines or borders. Hundreds or thousands of such drones need to be operational to obtain live data for the security and availability of the pipelines. These devices have limited battery life and they offer service for a short duration. Therefore, a frequent addition of new drones happens in the group. The newly added members should be able to communicate with each of the drones through a secure channel. In such scenario, a group authentication scheme is one of the practical solutions to apply. Each device's data transmitted should be understood by the others and the communication among the members should be private to the group members. Therefore, a secret key should only be known be the members of the drone groups.
\subsection{System Model}
The system model, which also covers the scenario above, is presented in Figure \ref{fig2}.\\
\indent $GM_i$: The $i^{th}$ group $G_i$'s manager is represented by $GM_i$ as in Figure \ref{fig2}. The group manager might have more processing power than any other member.\\
\indent $U^i_{j}$: Any member in the group $G_i$ is represented by $U^i_j$. A member can be any device equipped with data exchange capability to other members and the group manager.\\
{\it Network Model: }\\
\indent We are assuming that the first registration of a member to a group takes place in a secure environment. Apart from that, any further communication is open to the public.
\begin{figure}[t] 
	
	\centering
	\includegraphics[width=\linewidth]{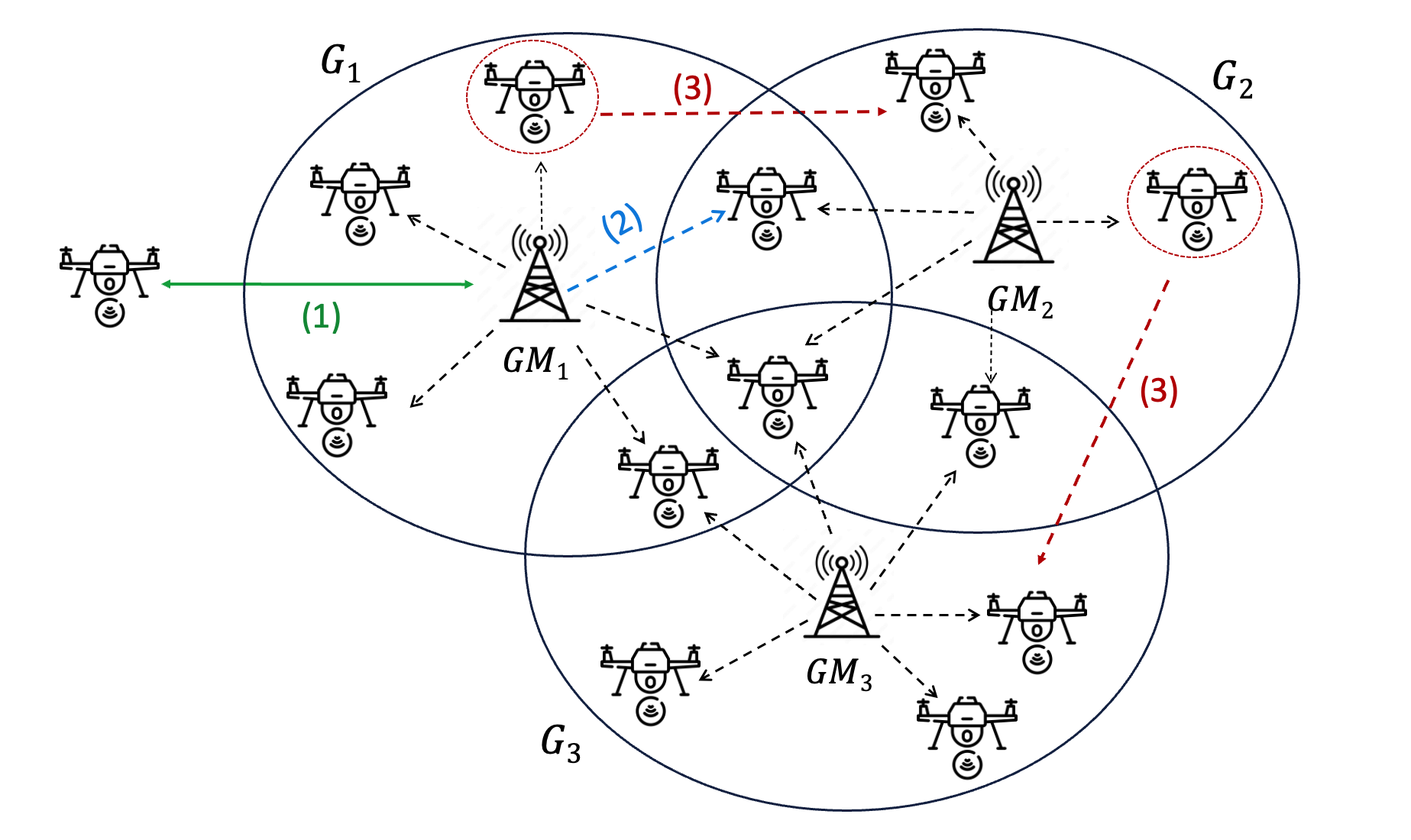}
	\caption{ The system model has three distinct connection types.  Type (1) represents the first registration via a secure channel. Type (2) demonstrates the channel between the group manager and the users who have completed the first registration phase and are ready to generate the secret key for confidential group communication and Type (3) illustrates the scenario where a user, who belongs to one group, joins another group. }\label{fig2} 
\end{figure}

 \indent Let consider a specific group  $G$  so that $U_{i}$ represent a member of it for some integer $i$.  A group manager, which is denoted by $GM$ for the group $G$ is assumed to have  more computational  power and energy resource compared to the other members of the group.  In  general, the key distribution of a user in the group is expected to be handled by the group manager. On the other hand, the proposed method also allows a way that each group member can authenticate others in $G$. All groups in the scheme employ  subspaces of a predetermined universal inner product space $E$. One might  select $E$ to be an infinite dimensional vector space; for example, $E$ can be all polynomials over a finite  field $\mathbb F$.  \\
\indent A basis $\mathcal B=\{v_{1},\dots,v_{n}\}$ for the subspace $W$  is selected to be secret to the group manager $GM$, that is,
$$W= \text{ Span }(\mathcal B)= \text{  Span }\{v_{1},\dots,v_{n}\}.$$
The nature of subspace $W$ allows one to select infinitely many bases for it but  knowing any basis for the subspace $W$ is sufficient to obtain the group’s secret key. However, the algorithm’s design makes it necessary to know the chosen basis to break the group’s authentication scheme. The group manager keeps $W$ and the selected basis $\mathcal B$ secret. The manager $GM$ also employs a randomly selected function $f(x)$ while  distributing the secrets to individuals. The selected function $f(x)$ can be a polynomial of the degree $d$. The security analysis in the next part indicates that the integer $d$ might be selected larger than the expected number of users in the group $G$. The group manager's secrets are shown in  Figure \ref{fig:secrets}. 

\begin{figure}[t] \label{FigGrp}
	
	\centering
	\includegraphics[width=\linewidth]{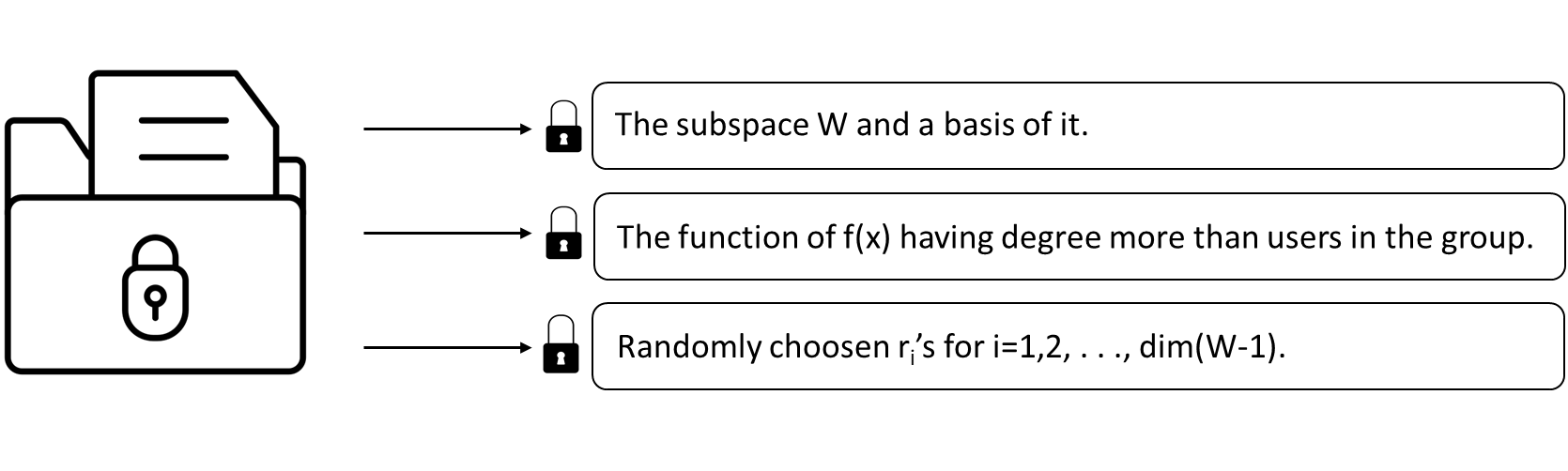}
	\caption{ The group manager's private information.} \label{fig:secrets}
\end{figure}

\indent Any user $U_{i}$ in the group $G$ is gets a public key $x_{i}$  which is preferably an integer. The secret of $U_{i}$ is $$\mathcal B_{i}=\{f(x_{i})v_{1},f(x_{i})r_{1}v_{2},f(x_{i})r_{2}v_{3},\dots,f(x_{i})r_{n-1}v_{n}\}$$ 
\noindent where $r_1,r_2,\dots,r_{n-1}$ are  numbers selected randomly by the corresponding group manager  $GM$ as depicted in Figure 2. These numbers stay the same for all members of the group. Note that each group member will have a linearly independent set in $W$, and these bases act as secret keys of the respected members. The private information of each individual is independent of one another. Algorithm \ref{OPAlg2}  shows how the group manager generates a key for each user.

\begin{algorithm}\label{OPAlg2}
	
	\DontPrintSemicolon
	\SetKwInput{kwInput}{Input}
	\SetKwInput{kwMain}{Key Derivation}
	\SetKwInput{kwOutput}{Key Distribution}
	\kwInput{ 
		\par
		\nl {Size of the group, degree of the function, Universal Space $E$.}\\
		
	}
	\kwMain{
		\par
		\nl $m\gets \text{  Size of the Group.}$\\
		\nl $d\gets \text{ Degree of the function } f(x)$.\\ 	
		\nl $\mathcal{B} \gets \text{ random } \{v_1,\dots,v_n\}: v_i\in E. $\\
		
		\nl $r \gets \text{ random } \{r_1,\dots,r_{n-1}\}: r_i \text{ in the base field } \mathbb F.$\\
		\nl $f(x)\gets \text { random element in } \mathbb F[x] \text{ of degree } d.$ 
		
	}
	\kwOutput{
		
		\nl Public key of user $U_i \gets x_i: x_i\in \mathbb F$. \\
		\nl Private key of user $U_i \gets \{f(x_i)v_1,r_1f(x_i)v_2,\dots,r_{n-1}f(x_i)v_n\}$

	\caption{Key Generation and Distribution: }
	}
\end{algorithm}

\subsubsection{ Group Authentication} The objective of  group authentication is to confirm whether several users belong to a group or not. This confirmation can be performed one by one but it would be infeasible in a crowded environment consisting of power constraint devices.  Our proposed method with inner product spaces addresses this issue in a practical way. In the proposed schemes, the registration phase assigns a basis to each member, and the private key of a member is embedded into the basis. In other words, each member has a distinct basis for a predetermined subspace $W$. Having a basis would allow one to confirm  its group membership. In the first step of the proposed algorithm, a group manager or any member of the group publishes  random  vectors $\mathfrak v$ and  $g$. The participants who are able to compute the inner product of $g$ with the projection of $\mathfrak v$, $Proj_W\mathfrak v$, would be confirmed.

\begin{algorithm}     \label{GrpAuth}   
	     
	\SetKwInput{kwInput}{Input}
	\SetKwInput{kwMain}{Secret Derivation}
	\SetKwInput{kwOutput}{Output}
	\kwInput{
		\par
		 The group manager or any member of the group publishes  a random vector $\mathfrak v$ in $E$ preferably  $ \mathfrak v\not\in W$ along with a nonce vector $g$
	\nl $\mathfrak v \gets \text { random element in } E$
	\nl $g \gets \text{ random element in } E$ 
		
	}
   \kwMain {
   	\par
   	 Each user $U$ computes $ \langle g,Proj_W \mathfrak v\rangle$ \\
   \nl $\mathfrak s \gets \langle g,Proj_W \mathfrak v\rangle$

 \kwOutput{
 	\par
 The user $U_i$ releases the requested bits of $\mathfrak s$.
}
	\caption{: Group Authentication}           
	}
	
\end{algorithm}

 Algorithm \ref{GrpAuth} generates a secret $\mathfrak s=\langle g,Proj_W \mathfrak v\rangle $ and this private number $\mathfrak s$ is employed  for authentication. Each member might be asked to release certain bits of $\mathfrak s$ to be  confirmed as a member of the group. The members can demonstrate their valid membership by revealing the required bits of $\mathfrak s$. The number $\mathfrak s$ has no other use. It is constructed with two random vectors, so each number is equally likely to be the secret. Hence, disclosing some bits of it does not reveal any information about the group’s private subspace. A similar method to establish a key for a confidential channel is described below.

\subsubsection{Establishing Group Secret Key}

The process of generating a group secret also assures that a  non-member can not join the group communication.  In other words, this proposed method eliminates the need for  extra authentication step in case a secret group key construction is needed for confidential communication. The group secret can be established via the direction of  the group manager or any trusted element in the group.  A member selects two random vectors $\mathfrak v \in E$ and $\mathfrak h \in E$
preferably $\mathfrak v,\mathfrak h \not \in W$. The vectors $\mathfrak v$ and $\mathfrak h$ are made public and the group secret $s$ is extracted from it by computing
$$s=\langle Proj_{W}\mathfrak v, \mathfrak h \rangle .$$

Note that computing $s$ requires information of any basis for the subspace $W$, in other words, the projection of $\mathfrak v$ onto the subspace of $W$ is the same regardless of using any basis for~$W$.

	\begin{algorithm}     \label{A1}    
		      
		\SetKwInput{kwInput}{Input}
		\SetKwInput{kwMain}{Secret Key Computation}
		
		\kwInput{
			\par
			The group manager or any member of the group publishes two random vectors  $\mathfrak v, \mathfrak h\in E$ 
			\nl $\mathfrak v \gets \text { random element in } E$
			\nl $\mathfrak h \gets \text{ random element in } E$ 
			
		}
		\kwMain {
			\par
			Each user $U$ computes $ \langle \mathfrak  h,Proj_W \mathfrak v\rangle$ \\
			\nl $ s \gets \langle \mathfrak  h,Proj_W \mathfrak v\rangle$
		}

	\caption{Key Establishment }

	\end{algorithm}

The secret $s$ then is utilized as a key of any symmetric key algorithm for  the confidentiality of messages.  Constructing a group secret key doesn't require any exchange of data among users in the group since each peer can obtain the key by using its own tokens, which are the  basis elements of $W$.  Unlike the other group authentication methods, every user in the group can extract the group key  without any other users' private information which eliminates the security concerns of users while sharing their secrets. To extract the secret s, each user only needs to perform one projection operation. This is different from other schemes that use polynomial interpolation, where the cost of operation depends on the number of users involved in the authentication process.\\
\indent In case two or more members in the group desire to set up a secure and confidential channel among themselves, they can use group key establishment scheme along with a classical method like Diffie-Hellman key exchange or a public key algorithm. In fact, in case of DH or a public key algorithm each user needs to share its public information with the others. However, in a group environment,  before sharing their public information, the users  first employ Key Establishment algorithm to set a secret "$s$" which can guarantee that only legitimate group members can join the process. Once such secret is established, they can employ Diffie-Hellman key exchange or any other public key  algorithm while sending their public information by encrypting the established secret $s$. Such method removes the necessity of a certification authority which is necessary in any public key environment to prevent the man in the middle attack.

\begin{algorithm}     \label{HybridKey}             
	
	\caption{:Confidential Channel Inside of a Group }   
	\begin{algorithmic}[1]

		\renewcommand{\algorithmicrequire}{\textbf{Input:}}
		
		\renewcommand{\algorithmicensure}{\textbf{Result:}}

			\STATE Two members $U_1$, $U_2$ establish a confidential channel.
			\STATE  They process Algorithm \ref{A1}
			to create a secret $s$.
			\STATE  $U_1$ and $U_2$ exchange their public information by encrypting them with the secret $s$.

	\end{algorithmic}
	
\end{algorithm}

\subsubsection{Adding a user to a group by a member}
The group manager $GM$ might not be available to handle adding new members to the group $G$.  In such cases,  a member should be able to add a non-member, denoted by $UF$, to the group and that user can communicate securely among the users in the group.  The group manager should be able to easily recognize the group member who added $UF$ to the group. $UF$ might or might not be given all privileges of the group member until it becomes a member via the group manager $GM$.
      
Consider the group member $U_{i}$, which has the following basis set:
$$B_{i}=\{f(x_{i})v_{1},f(x_{i})r_2v_{2},\dots,f(x_{i})r_nv_{n}\}$$
The $U_{i}$ is selects a random number $\mathfrak t$  and constructs a new basis for $W$:
$$B'_{i}=\{\mathfrak t f(x_{i})v_{1},\mathfrak t f(x_{i})r_2v_{2},\dots,\mathfrak t f(x_{i})r_nv_{n}\}$$
The new basis $B'_{i}$ is given to the new user $UF$. The public key of $UF$ is $$(x_{i},\ell)$$ where $x_{i}$ is the public key of $U_{i}$, and $\ell$ is the index selected by $U_{i}$ for $UF$.

The sponsor $U_{i}$ does not need to know the function $f(x)$ to add the user $UF$ to the group conversation. Note that the user $UF$ can easily grasp the group secret $s$ by using its basis and  the new user's sponsor can be recognized from its public key  by anyone or from its private key  by the group manager  having the function $f(x)$.

\section{ Security Analysis}
\subsection{Cryptanalysis}
The privacy of a single individual whose public information is $x_i$ is violated if an adversary has the information of the user's private key which is a basis set $\{f(x_i)v_1,\dots, f(x_i)r_nv_n\}$. In such a situation, the adversary has every privilege of the user. We are going to discuss if such a scenario is possible by exploiting broadcast public information. The members who are going to set up a secure channel agree on a vector $\mathfrak v$. Such a vector is made public. This public vector, $\mathfrak v$, is preferably  selected from outside of subspace $W$ which is generated by the users' bases. Another entity being used during an authentication and a key establishment process is a nonce vector $\mathfrak h$. This setup yields:\\
{\bf Public Information:} A random vector $\mathfrak v$ which is not in the subspace $W$ and a nonce vector  $\mathfrak h$. \\ 
The authentication and the key establishment processes requests users to compute the projection of $\mathfrak v$ on the $W$ and use this vector to compute the inner product with the nonce $\mathfrak h$.

\begin{proposition} \label{Prp1}
	Let $\mathfrak v$ be a vector in a universal space $E$ and $t$ be its projection vector onto a subspace $W$ of $E$. Let $\mathfrak h$ be a random vector and $$\mathfrak p= \langle t, \mathfrak h\rangle$$
	It is not feasible to obtain any information about the subspace $W$ from $\mathfrak v, \mathfrak h$.   
\end{proposition}

	\begin{theorem}
 Let $\mathfrak v, E, t, W$ and $\mathfrak h$ be as above.	The number	$\mathfrak p$ is a session secret for authentication and is secret to the members joining the authentication phase. Each subspace of $E$ is equally likely to be $W$ even if one obtains the secret $\mathfrak p$.
	\end{theorem}

We present proofs of above two claims in the following.

\begin{proof}
	The $\mathfrak h$ and  $\mathfrak v$ are random ephemeral vectors. As they are random in the universal space, they have no relation with the subspace $W$. The randomness of the selection process indicate that the probability that any of these two vectors is in $W$ is negligible. Therefore, we might assume that $\mathfrak h$ and $\mathfrak v$ are vectors outside of $W$. \\
	\indent On the other hand, let's assume for a moment that an adversary model such as a game based adversary model is able to lead a method where the probability that the revealing of the secret $\mathfrak p$ is non-negligible. In other words, an adversary is assumed to obtain the session key $\mathfrak p$ for several session. The key is an element of the field that the universal space $E$ lies on it. The secret $\mathfrak p$ is obtained from computing $\langle t, \mathfrak h\rangle$ where $\mathfrak h$ is public but $t$ is not.\\
	{\it Claim:} In any subspace $B$ of $E$, one can find a vector $b$ such that $$\langle b, \mathfrak h \rangle =\mathfrak p$$ 
	Let $B$ be a subspace of $E$ and $\mathfrak b$ be a vector in $B$. Suppose $$\langle \mathfrak b, \mathfrak h\rangle =\mathfrak c$$ 
	
	\noindent and let $$\mathfrak a =\dfrac{\mathfrak p}{\mathfrak c}.$$ 
	
	\noindent Then the vector $b=\mathfrak c \mathfrak b$ in $B$ gives 
	$$\langle b, \mathfrak h\rangle =\mathfrak p.$$

	The above argument indicates that any subspace of $E$ is equally likely to be $W$. In other words, even if an adversary obtains the secret $\mathfrak p$, the adversary cannot gain any clue about the secret subspace $W$ since there exists a vector $b$ for any subspace of $E$ such that$\langle b, \mathfrak h \rangle =\mathfrak p$.

\end{proof}

\begin{remark}
The above statements imply that the secrecy of the subspace $W$ of $E$ is preserved even if an adversary obtains the session keys multiple times. In fact, the adversary cannot infer anything about the dimension of $W$, let alone the subspace itself.
\end{remark}

In the key establishment phase, the constructed private key $\mathfrak s$ is going to be used during secure group communication. The adversary  has  only the public vector $\mathfrak v$. The following proposition claims that it is not feasible to get this vector without knowing the subspace $W$.

\begin{proposition} \label{Prop2}
	Let $G$ be a group and $\mathfrak v_t$ be a public vector broadcast at the time $t$. Suppose that an adversary captures  a number of distinct public vectors at different times. It is still infeasible for the adversary to guess  the required basis for the group's secret key let alone a legitimate user's private information. 	
	
\end{proposition}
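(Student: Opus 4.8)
The plan is to split the statement into the two assertions it actually makes---that the adversary cannot guess a basis generating the group secret, and (\emph{a fortiori}) cannot recover a legitimate user's private data---and to treat them in increasing order of strength. First I would reduce ``recovering the group key'' to ``recovering the subspace $W_i$.'' Since the group secret is $s=Proj_{W_i}\mathfrak v$ and, as already recorded in the construction, the orthogonal projection onto $W_i$ depends only on the subspace and not on the chosen basis, an adversary can produce $s$ for the current challenge $\mathfrak v$ if and only if it can reproduce the projection onto $W_i$, equivalently, if and only if it can exhibit \emph{any} basis spanning $W_i$. Thus the whole proposition rests on showing that $W_i$ itself is not recoverable from the public transcript.

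Second, I would pin down exactly what that transcript contains. The only group data broadcast are the public vectors $\mathfrak v_1,\dots,\mathfrak v_m$, each chosen by $GM$ subject to $\mathfrak v_t\notin W_i$; the secrets $s_t=Proj_{W_i}\mathfrak v_t$ are consumed as symmetric keys and are never sent in the clear, and the scalars $r_j$, the values $f_i(x_{ij})$, and the basis $\mathcal B_i$ occur only inside users' private bases. Hence the sole information the transcript carries about $W_i$ is the family of non-membership constraints $\mathfrak v_t\notin W_i$ for $t=1,\dots,m$. That the inner product is public is irrelevant here, since no projection output is ever observed.

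Third---and this is the heart of the argument---I would quantify how little those constraints narrow down $W_i$. In a concrete finite-dimensional instantiation where $E$ has dimension $N$ over $\mathbb F_q$ (both large) and $\dim W_i=n$, the number of candidate $n$-subspaces is the Gaussian binomial $\binom{N}{n}_q\approx q^{n(N-n)}$. A fixed nonzero $\mathfrak v_t$ lies in only a fraction $\binom{N-1}{n-1}_q/\binom{N}{n}_q=(q^n-1)/(q^N-1)\approx q^{n-N}$ of these subspaces, so each non-membership constraint removes only a negligible fraction, and the set of subspaces consistent with the whole transcript still has size $\approx q^{n(N-n)}(1-m\,q^{n-N})$. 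From the adversary's viewpoint $W_i$ is therefore nearly uniform over an astronomically large family, and any single guess succeeds with probability $\approx q^{-n(N-n)}$, which is negligible; in the infinite-dimensional polynomial model for $E$ the consistent family is literally infinite.

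Finally I would dispatch the ``let alone'' clause as an \emph{a fortiori} step: recovering a user's private information demands not just the subspace but the actual value $f_i(x_{ij})$, hence the polynomial $f_i$, and since $\deg f_i=d$ is chosen larger than the number of users, even pooling every member's share yields fewer than $d+1$ evaluation points and so underdetermines $f_i$ by Lagrange interpolation. I expect the main obstacle to sit in the third step, namely making ``infeasible'' precise---that is, arguing rigorously that the non-membership constraints leave the posterior over subspaces essentially flat. The Gaussian-binomial counting makes the cardinality estimate tractable, but one must additionally justify that $GM$'s honest sampling of each $\mathfrak v_t$ outside $W_i$ induces a distribution whose dependence on $W_i$ is \emph{only} through non-membership, so that no subtler correlation across the $\mathfrak v_t$ leaks further structure.
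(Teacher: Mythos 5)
Your argument is sound, but it is worth knowing that the paper offers no proof of this proposition at all: it is stated bare, justified only by the informal sentence preceding it (``Even the adversary might have obtained distinct public vectors over time, it can not guess a basis set generating the same subspace\dots''). So your route is not merely different in emphasis; it supplies the missing content. Your decomposition --- reduce key recovery to recovery of the projection map, hence of $W_i$; observe that the transcript constrains $W_i$ only through the non-membership conditions $\mathfrak v_t\notin W_i$; then count with Gaussian binomials to show each condition prunes only a $(q^n-1)/(q^N-1)\approx q^{n-N}$ fraction of the $\binom{N}{n}_q\approx q^{n(N-n)}$ candidate subspaces --- is exactly the kind of quantitative statement the paper's security parameters $\beta_1=|\mathbb F_q|$ would need in order to mean anything, and it is consistent with the paper's own later remark (in the proof of its Proposition on multiple compromised bases) that the inner product spaces are taken over large finite fields. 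Two small points. First, the obstacle you flag at the end largely dissolves by symmetry: if $GM$ samples $\mathfrak v_t$ uniformly from $E\setminus W_i$, then the likelihood of the observed transcript given a candidate $W$ is $\prod_t \mathbb 1[\mathfrak v_t\notin W]\,/\,|E\setminus W|^m$, and since $|E\setminus W|=q^N-q^n$ is identical for every $n$-dimensional candidate, the posterior is exactly uniform on the consistent family --- no subtler correlation can leak. Second, your claim that no projection output is ever observed is slightly too strong: the authentication phase broadcasts ciphertexts $\mathcal E_{k_i}(\cdot)$ under keys $k_i=Proj_{W_i}\mathfrak u$, so you should add the standard hypothesis that the symmetric cipher does not leak its key; with that caveat your \emph{a fortiori} treatment of the user-secret clause (fewer than $d+1$ interpolation points underdetermine $f_i$) matches the paper's own Proposition on reconstructing $f(x)$ and closes the argument.
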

\begin{proof}
	The public vectors $\mathfrak v_t$ are selected by members of the group and they are not in $W$. As $W$ sits on a universal space $E$ which might be of infinite dimension, knowing several vectors outside of $W$ is not enough to generate $W$ itself.
\end{proof}

\begin{proposition}
	Suppose an adversary $A$ has the basis set $\{f(x_i)v_1,\dots, f(x_i)r_nv_n\}$ which belongs to a user $U_i$. It is still infeasible to obtain $f(x_i)$. 
\end{proposition}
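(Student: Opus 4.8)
The plan is to show that $f(x_i)$ is not merely computationally but \emph{information-theoretically} hidden by the basis set, by exhibiting a scaling symmetry (a \emph{gauge freedom}) under which the adversary's data is invariant while $f(x_i)$ ranges over all nonzero elements of $\mathbb F$. Write the leaked data as the $n$ vectors $w_1 = f(x_i)v_1$ and $w_k = f(x_i)r_k v_k$ for $k=2,\dots,n$, and recall that the basis $\{v_1,\dots,v_n\}$ of $W$ and the scalars $r_2,\dots,r_n$ are secrets held only by $GM$ (and that $f(x_i)\neq 0$, $r_k\neq 0$, since the set is a basis). I would first record that the adversary certainly recovers the subspace $W=\mathrm{Span}(w_1,\dots,w_n)=\mathrm{Span}(v_1,\dots,v_n)$, because each $w_k$ is a nonzero scalar multiple of $v_k$; but $W$ is blind to any common rescaling of a basis, which already signals that an overall scale such as $f(x_i)$ cannot be read off from the $w_k$ alone.

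The key step is the blinding argument. Fix any nonzero $\alpha\in\mathbb F$ and replace the secret tuple $(f(x_i),r_2,\dots,r_n,v_1,\dots,v_n)$ by
$$\bigl(\alpha f(x_i),\ r_2,\dots,r_n,\ \alpha^{-1}v_1,\dots,\alpha^{-1}v_n\bigr).$$
A direct substitution shows that this alternative tuple produces exactly the same $w_1,\dots,w_n$: the factor $\alpha$ in $\alpha f(x_i)$ cancels the factor $\alpha^{-1}$ in each rescaled $v_k$. Moreover the vectors $\alpha^{-1}v_k$ are again linearly independent, so they still constitute a legitimate secret basis of the \emph{same} subspace $W$, and the scalars $r_k$ are untouched. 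Hence the map from the secret data to the leaked basis set has, for every observation, a full one-parameter orbit of preimages indexed by $\alpha\in\mathbb F^\times$, along which the value of $f(x_i)$ sweeps out all of $\mathbb F^\times$.

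I would then conclude that, conditioned on the observed basis set, the true $f(x_i)$ is indistinguishable from $\alpha f(x_i)$ for every $\alpha$; under the natural assumption that $GM$ chooses the secret basis independently of the scalar $f(x_i)$, the posterior distribution of $f(x_i)$ is flat over $\mathbb F^\times$, so no procedure---regardless of running time---can output $f(x_i)$ with better than guessing probability. The intuition is already visible from the single relation $w_1=f(x_i)v_1$: with $v_1$ unknown, every candidate $c\neq 0$ is consistent via $v_1=c^{-1}w_1$, and the remaining relations do not help, since each introduces its own fresh unknowns $v_k$ (and $r_k$).

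The main obstacle is not any hard inequality but stating the independence/uniformity hypothesis cleanly enough that ``infeasible'' upgrades to ``information-theoretically impossible.'' I would make precise that the secrecy of the basis is what does the work: the leaked vectors determine $W$ exactly, yet $f(x_i)$ lives entirely in the scale degree of freedom that the subspace forgets. This also explains why the attack does not improve even if the adversary additionally computes scale-invariant quantities such as the ratios $w_k/w_1$, since $f(x_i)$ cancels from every such ratio.
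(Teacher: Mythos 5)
Your proposal is correct, and it supplies a genuine argument where the paper offers only an assertion: the paper's entire proof reads, in effect, ``the adversary does not know $\{v_1,\dots,v_n\}$ or $r_2,\dots,r_n$, therefore it cannot guess $f(x_i)$,'' which restates the conclusion rather than deriving it. Your blinding step --- replacing $(f(x_i),v_1,\dots,v_n)$ by $(\alpha f(x_i),\alpha^{-1}v_1,\dots,\alpha^{-1}v_n)$ and checking that the leaked vectors $w_k$ are unchanged while the rescaled vectors remain a legitimate basis of the same subspace $W$ --- is exactly the missing mechanism: it identifies the overall scale as a degree of freedom that the observation forgets, so that every nonzero candidate for $f(x_i)$ is consistent with the data. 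This upgrades the paper's ``infeasible'' to an information-theoretic impossibility (modulo the uniformity hypothesis on the choice of basis, which you rightly flag as the one assumption that must be stated to make the posterior flat over $\mathbb F^\times$). Your observation that scale-invariant quantities such as $w_k/w_1$ leak nothing about $f(x_i)$ also dovetails with the paper's own Proposition 3.3, which concedes that an adversary holding several users' bases learns the \emph{ratios} $f(x_i)/f(x_j)$ --- precisely the scale-invariant information your symmetry argument predicts is all that can leak. The only caution is that your conclusion is conditional on the basis being drawn independently of $f(x_i)$ and with a scaling-invariant distribution; the paper never states such a hypothesis, so strictly speaking your proof proves a cleaner, well-posed version of a claim the paper leaves informal.
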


\begin{proof}
	The adversary does not know the original basis set $\{v_1,\dots,v_n\}$ and the scalars  $r_2,\dots,r_n$ therefore it is not possible to guess $f(x_i)$.   	
\end{proof}

\begin{proposition}\label{FFsize}
	Suppose an adversary $A$ has obtained more than one user's bases. It is still not possible to obtain the function $f(x)$ or any other user's private information.
\end{proposition}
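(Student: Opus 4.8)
The plan is to determine precisely what an adversary holding several users' bases can compute, and then to show the residual uncertainty is large enough to hide $f(x)$ entirely. Suppose the adversary holds the bases of users $U_1,\dots,U_m$ with public keys $x_1,\dots,x_m$, so that for each $U_j$ it sees the vectors $f(x_j)v_1$ and $f(x_j)r_kv_k$ for $k=2,\dots,n$. First I would record the only genuinely extractable data: for a fixed coordinate $k$, the vectors seen across different users are all scalar multiples of the \emph{same} hidden vector ($v_1$, respectively $r_kv_k$), so the adversary can recover the ratios $f(x_j)/f(x_{j'})$, i.e.\ the projective point $[f(x_1):\dots:f(x_m)]$, but no absolute scalar. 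A second, independent channel is the within-user cross-coordinate ratio $f(x_j)r_kv_k/(f(x_j)v_1)=r_kv_k/v_1$, a rational function in which $f$ has cancelled; it reveals relative basis structure only and is identical for every user, so it contributes nothing toward $f$.

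The second step is to make the scaling ambiguity explicit, extending the single-user argument of the preceding proposition to the multi-user setting. For any nonzero $c\in\mathbb F$, replacing $f$ by $cf$ and simultaneously $v_1$ by $c^{-1}v_1$ and each $r_kv_k$ by $c^{-1}r_kv_k$ leaves every observed vector unchanged. Since $v_1,\dots,v_n$ and $r_2,\dots,r_n$ are shared by the whole group and never exposed, the true function is information-theoretically indistinguishable from $cf$ for every $c$. Thus $f$ can at best be recovered up to a global multiplicative constant, and pinning that constant would require knowing $v_1$ absolutely, which the data never provides.

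The decisive third step is an interpolation count. Even granting the adversary the normalized values $h(x_j)=f(x_j)/f(x_1)$, the function $h(x)=f(x)/f(x_1)$ is a polynomial of the same degree $d$ as $f$, known only at the $m$ points $x_1,\dots,x_m$. Because the construction stipulates $d>m$ (the degree of $f$ exceeds the number of users), the polynomials of degree at most $d$ matching these $m$ prescribed values form an affine family of dimension $d+1-m\ge 1$; over $\mathbb F$ this is a full $\mathbb F$-subspace of candidates consistent with everything observed. Hence neither $f$ nor its value $f(x_\ell)$ at any fresh public key $x_\ell\notin\{x_1,\dots,x_m\}$ is determined, and the adversary cannot manufacture the basis $\{f(x_\ell)v_1,f(x_\ell)r_2v_2,\dots\}$ of a user it does not already hold. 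This settles both assertions of the proposition.

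The hard part will not be any single step but stating the consistency argument cleanly: I must keep the leaking quantities (the projective vector of function values and the user-independent direction data $r_kv_k/v_1$) separated from the hidden ones (the global scale and, crucially, $f$ at unseen points), and check that these two leakage channels do not combine to defeat the degree bound. I would close this by exhibiting, for any prospective point $x_\ell$, a degree-$d$ polynomial interpolating the known normalized data while taking an arbitrarily prescribed value at $x_\ell$ — possible precisely because $d>m$ — which witnesses that the adversary's view is compatible with every possible private key for $U_\ell$.
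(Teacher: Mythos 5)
Your proposal is correct, and its opening step is exactly the paper's entire proof: the paper observes only that the adversary can recover the ratios of the users' values under $f$ (your projective point $[f(x_1):\cdots:f(x_m)]$) but not $f$ itself, and then asserts the conclusion. Where you diverge is in refusing to stop there. You add two things the paper omits: the explicit scaling symmetry $f\mapsto cf$, $v_k\mapsto c^{-1}v_k$ showing the global scale is information-theoretically hidden, and, more importantly, the interpolation count showing that even the \emph{normalized} polynomial $h(x)=f(x)/f(x_1)$ remains undetermined because $d>m$ leaves an affine family of dimension $d+1-m$ of consistent candidates. That second point is not decoration; it is needed for the proposition's second claim. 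If the adversary could pin down $f$ up to a scalar (which it could whenever $m\ge d+1$), it could compute $f(x_\ell)v_1=\bigl(h(x_\ell)/h(x_i)\bigr)f(x_i)v_1$ for a fresh user $U_\ell$ directly from a held basis, defeating the proposition. The paper effectively relegates this degree condition to its separate Proposition on reconstructing $f$ (the $d>\mathfrak k$ statement), whereas you fold it into the proof where it is actually load-bearing. So: same core observation as the paper, but your version supplies the quantitative argument that makes the claim true only under the stated degree hypothesis, which the paper's one-line proof leaves implicit.
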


\begin{proof}
	At this point, we should note there that in practice the inner product spaces are taken to be over large-size finite or real fields. Let's assume that the adversary $A$ has the following bases:
	$$
	\{f(x_i)v_1,\dots, f(x_i)r_nv_n\},\cdots,\{f(x_{j})v_1,\dots,f(x_j)r_nv_n\}
	$$	
	From this information, it is possible to obtain the ratio of the value of public information of users under the function $f(x)$ but not the function itself. Without the function $f(x)$, it is not possible to get other members' private data.	
\end{proof}

\begin{proposition}\label{Prop1}
	Let $S_i=\{f(x_i)b_1,r_2b_2,\dots,r_nb_n\}$ be a basis for a subspace $W$ of vector space $E$ for $i=1,\dots,\mathfrak k$ and $f(x)$ is  function of degree $d$. As long as $d>\mathfrak k$, it is infeasible to construct $f(x)$. 	
\end{proposition}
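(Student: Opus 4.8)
The plan is to recast the question as a polynomial interpolation problem and settle it with a rank count on the associated linear system. First I would grant the adversary the strongest information conceivably extractable from the $\mathfrak k$ captured bases, namely the exact evaluations $y_i=f(x_i)$ at the public points $x_1,\dots,x_{\mathfrak k}$. This is already more than the adversary truly has, since by Proposition \ref{FFsize} the captured bases reveal only ratios $f(x_i)/f(x_j)$; proving infeasibility under the more generous assumption therefore suffices a fortiori. Writing $f(x)=\sum_{m=0}^{d} c_m x^m$, the quantities to be determined are the $d+1$ coefficients $c_0,\dots,c_d\in\mathbb F$.

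Next I would assemble the interpolation constraints $f(x_i)=y_i$, $i=1,\dots,\mathfrak k$, into a single linear system $V\mathbf c=\mathbf y$, where $V$ is the $\mathfrak k\times(d+1)$ Vandermonde matrix with entries $V_{im}=x_i^{\,m}$ and $\mathbf c$ is the column vector of coefficients $(c_0,\dots,c_d)$. Because the public points $x_i$ are distinct, the leftmost $\mathfrak k\times\mathfrak k$ block of $V$ is a square Vandermonde matrix with distinct nodes and hence invertible, so $V$ has full row rank $\mathfrak k$. The decisive observation is that the hypothesis $d>\mathfrak k$ forces $\mathfrak k<d+1$, so $V$ has strictly more columns than rows and its kernel has dimension $(d+1)-\mathfrak k\ge 1$. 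Consequently the solution set $\{\mathbf c : V\mathbf c=\mathbf y\}$ is a nonempty affine subspace of dimension at least one, containing at least $|\mathbb F|^{\,(d+1)-\mathfrak k}\ge|\mathbb F|$ distinct coefficient vectors, each of which is a genuine degree-$d$ polynomial agreeing with every observed value.

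Translating back, the adversary holding only $\mathfrak k\le d$ evaluation constraints cannot separate the true $f$ from the $|\mathbb F|^{\,(d+1)-\mathfrak k}$ equally consistent alternatives, so any attempt to name the correct coefficients succeeds with probability at most $|\mathbb F|^{-((d+1)-\mathfrak k)}$, which is negligible over the large finite fields used in practice. Since the real adversary possesses even less, knowing ratios rather than values, infeasibility follows at once.

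I expect the genuine difficulty to lie not in the linear algebra, which is routine, but in justifying the reduction that opens the argument: one must argue that the $\mathfrak k$ bases furnish no more than $\mathfrak k$ independent scalar constraints on $f$. The unknown common vectors $b_1,r_2b_2,\dots,r_nb_n$ and the hidden scalars $r_2,\dots,r_n$ must be shown to absorb all the extra degrees of freedom, so that each user contributes exactly one new evaluation (or, more realistically, one new ratio) and nothing richer. Making this reduction airtight, so that the clean interpolation count legitimately applies, is the step that requires care.
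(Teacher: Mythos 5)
Your proposal is correct and rests on the same core fact as the paper's proof: a polynomial of degree $d$ is underdetermined by fewer than $d+1$ evaluations. The difference is one of rigor rather than route. The paper disposes of the proposition in a single sentence by citing the interpolation fact that at least $d+1$ points on the graph of $f(x)$ are needed to reconstruct it; you actually prove that fact via the rank count on the $\mathfrak k\times(d+1)$ Vandermonde system, and you go further by quantifying the ambiguity (an affine solution space of dimension $(d+1)-\mathfrak k$, hence at least $|\mathbb F|^{(d+1)-\mathfrak k}$ consistent polynomials), which turns ``infeasible'' into a concrete success probability. You also correctly identify the step the paper leaves entirely implicit: one must argue that the $\mathfrak k$ captured bases yield at most $\mathfrak k$ scalar constraints on $f$ (indeed only ratios $f(x_i)/f(x_j)$, as the paper's Proposition~3.3 observes), with the unknown $b_j$ and $r_j$ absorbing the remaining degrees of freedom. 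That reduction is the genuinely delicate part of the claim, and neither the paper nor your sketch fully discharges it, but your generous-adversary framing (granting exact evaluations and still obtaining underdetermination) is the right way to make the argument a fortiori.
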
 

\begin{proof}
	The polynomial $f(x)$ has degree $d$ and it is known from Newton's theorem that constructing $f(x)$ is only possible when at least $d+1$ points on the graph of $f(x)$ is known.	
\end{proof}

Proposition \ref{FFsize} forces to define a security parameter which is the size of a field $\mathbb F$ where the employed inner product space lies. In addition to this, the degree of $f(x)$ is another security parameter by Proposition \ref{Prop1}. Let the security parameters be
$$\beta_1=\text { size of } \mathbb F \quad \text { and  } \beta_2=\deg f(x)$$

It is not hard to observe that these parameters are directly related to the cost of key derivation operations. In other words, they offer a trade-off between security\&privacy and the cost of computations. 

\subsection {Known Cyber Attack Analysis: }
In this section, we  discuss the well-known thread models and the proposed algorithm's resistance to these attacks.\\
\indent \textbf{DOS Attack:} In the secret sharing-based GASs, the authentication is performed only when  a certain number of shares or more are available and the authentication can be completed only if all participants are legitimate.   This means that even if a single illegitimate user participates in the authentication process, the authentication fails. In addition to this, the group manager or members joining the authentication process can not recognize the illegitimate user and  this makes such algorithms  vulnerable to DOS attacks. This means that any attacker can disrupt the service and the culprit may be unknown. In our scheme, only those who have a basis set for a fixed subspace can join the authentication or key agreement phase. Thus, an intruder with an invalid basis cannot interfere with the authentication phase or cause any trouble. Therefore, our scheme prevents a DOS attack by an attacker.\\
\indent \textbf{ Replay and Man-in-the-Middle Attack:}  The fact that the authentication algorithm does not require any private data sharing creates a safe environment against Man-in-The-Middle Attack. In fact, since an authentication or construction of a shared secret key is handled privately by using a public random vector, an intruder without knowledge of the predetermined subspace, can not proceed communicating with any group member.    Moreover, since for each authentication session, the  group publishes a different vector and a nonce, any adversary sniffing exchanged data in earlier sessions can not perform a replay attack.

\indent \textbf{Group Manager Compromise Attack: }A node in a group has two sets of crucial data: One is for confidential group communication, and the other one is for confirming its identity in the group. This crucial data is provided by the manager of the group. Therefore, compromising the group manager's secret allows one to access all members' private information.  Unlike the other group authentication algorithms that are based on polynomial interpolation, compromising a few members' private keys does not allow an adversary to impersonate a group manager. As noted in Proposition  \ref {Prop1}, it is not possible to take over the group management unless the attacker can capture at least $d+1$ points on the graph of  $f(x)$  where $d$ is the degree of the  group manager's private function $f(x)$.

Let $GM$ be the manager of the group $G$ and $f(x)$ be the private data of $GM$ such that $\deg f(x)=d$. Assume that an adversary obtains the secrets of some members of $G$. Let $S_i= \{f(x_i)b_1,f(x_i)r_2b_2,\dots,f(x_i)r_nb_n\}$ be the such private information for the users $U_i$ for $i=1,\dots,\mathfrak t$.  The knowledge of this information does not allow for the extraction of the point $(x_j,f(x_j))$ on the graph of  $f(x)$. Even if such an information is obtained,  as long as $\mathfrak t<d$, the adversary can not act as a group manager  by Proposition \ref{Prop1}.\\
\indent \textbf{ Impersonate group membership:}
Each member has a basis for $W$, and the security of group communication depends on the knowledge of the subspace $W$. In other words, an adversary does not need to know individuals' secrets to compromise the confidentiality of group communication; instead, any basis for $W$ provides enough tools for an adversary. Therefore, the only way to impersonate a group member is to have a basis for $W$. The above propositions \ref{Prp1}, \ref{Prop2} say  it is not feasible to obtain a set that can span $W$ from public information disclosed during authentication or key establishment sessions. We show that obtaining the session key during an authentication or key establishment session can be used as a subroutine to get a basis for $W$. 
\begin{theorem}
	Let $W$ be a subspace of $E$ with a dimension $d$. If an adversary is able to extract a session secret for authentication or key establishment, then the procedure of the adversary can be used to reveal the subspace $W$. 
\end{theorem}

\begin{proof}
	Let's assume the adversary $A$ infiltrates a group $G$. Let $W$ be the secret subspace of $G$ with dimension $d$. In order to be able to join confidential communication, $A$ needs to compute the projection of random vectors onto $W$. In fact, in a communication session, an ephemeral random vector $\mathfrak v$ is released, and each member computing $s=Proc_W\mathfrak v$ can join the conversation. If $A$ can impersonate a group member then $A$ must have the projection vector $s$. The projection vector $s$ lies in the space $W$. If $A$ can attend $d$ number of sessions, then $A$ has $d$ projection of random vectors onto $W$. The probability of $d$ vectors being dependent is negligible, and as the dimension of $W$ is $d$, such $d$ vectors generate the whole subspace $W$.
\end{proof} 	
\begin{corollary}
	The above discussion shows that an adversary can impersonate a group member if and only if the adversary has a basis for $W$. 
\end{corollary}
\indent \textbf{ Impersonate a specific member:}	The identity confirmation takes place while authentication occurs with the group manager. In such cases, the private information, which is the image of the public key of the user under the group manager secret function, needs to be confirmed by the manager. In other words, in addition to having a basis for the subspace $W$, an adversary also needs to have the secret function of the manager.  \\
\indent In order to illustrate the impersonation  attacks, below we present a simple example.

\begin{example}
	Let denote the set of users by $G$ and assume that the universal space is $E=\mathbb R^{10}$ and the inner product is the well-known dot product.  In practical applications, $E$ can also be selected as polynomial space over a field with the standard inner product involving integral. Consider a subspace $W$ of dimension 3. The group manager selects a basis $\{b_1,b_2,b_3\}$ for $W$ and a random polynomial $f(x)$ of degree 10. Note that each $b_i$ lies in $\mathbb R^{10}$. The public key of the user $U_i$ is $i$ and the private key is $$U_i\leftarrow\left\{f(i)b_1,f(i)\dfrac{1}{3}b_2,f(i)\dfrac{2}{5}b_3\right \}$$ 
	Note that the group manager selects $r_2=1/3$ and $r_2=2/5$. 
	In order to involve confirmation and key establishment phase, each member $U_i$ needs to compute the projection of a random vector $\mathfrak v$ in $\mathbb R^{10}$.\\
	The public information is just the vector $\mathfrak v$. \\
	Therefore an adversary only knows some information about the universal space $E=\mathbb R^{10}$. \\
	The secret of the group is $W$ which is generated by $\{b_1,b_2,b_3\}$.  There are infinitely many subspaces of $\mathbb R^{10}$ of dimension three. Note that the dimension is also secret to the group members.\\
	An adversary who wants  to join the group confidential communication needs to find the projection of $\mathfrak v$ onto $W$ where he/she only knows $\mathfrak v$.
\end{example}

\indent {\bf Forward Secrecy}:   In the first version of the group authentication schemes  \cite{Harn}, the users are able to use the keys only once. Therefore, there is no permanent key involved and the perfect forward secrecy is provided. Harn's third method and the method with elliptic curves allow the user to employ their keys multiple times, which means that users have permanent keys. However, to extract the key to providing confidential group communication, one needs several other users' public and private information. Unless one obtains at least $t$ members' information (assuming $(t,m,n)$ is in place), it is not possible to recover the group's secret key and reveal the group's earlier exchanged data. \\
The group secret in our authentication  is extracted by following the steps:
\begin{enumerate}
	\item A group manager or any member broadcasts two  ephemeral random vectors $\mathfrak v,g$ selected from the universal space $E$.
	\item Each member $U$ in the group computes projection $Proj_W\mathfrak v$ of $\mathfrak v$ onto the space $W$ where each member has a basis for it.
	\item Then each user computes the inner product $$\langle g, Proj_W\mathfrak v\rangle$$
	and extract the key for authentication. The users exchange predetermined bits of this secret to be confirmed by others in the group. 
\end{enumerate}  
Note that the use of a nonce in the authentication phase is necessary as some information about the resulting value is shared by the members among them in an open channel. In such a situation, if $g$ is not placed as in the case of the secret key establishment, the users should exchange some information about the projection of $\mathfrak v$ onto $W$. As the projection vector lies in $W$, an adversary can obtain a basis by sniffing data from several authentication sessions. 

On the other hand, in the key establishment scheme, each user employs a random vector for each session to extract the secret. In other words, the secret key for confidential communication is obtained via a permanent key and a temporary session vector.  Each group member can use its secret multiple times; in other words, the public and private information of users can be considered permanent keys. However, for each group communication session, the secret is established through the use of an ephemeral key $v$. The extracting of the session is not only depending on the permanent key. The ephemeral key is also needed to be present to reveal the group's secret key for each session. In other words, the perfect forward secrecy is provided by putting ephemeral keys into the play of extracting the group session keys.

\section{Run Time }
During an authentication or key establishment phase, any user needs to compute the projection of a given vector onto the subspace generated by the basis elements. Each user has a unique basis  for the subspace determined by the group manager. Let $W$ be the assigned subspace by the group manager $GM$ such that $$\dim W=n.$$

Each member has a basis consisting of exactly $n$ elements.  Each user joining an authentication or key establishment process needs to compute the projection of a vector on the $W$. One way to find the projection vector requires obtaining an orthonormal basis for $W$. Such a basis can be found by  using the Gram-Schmidt orthogonalization procedure which  requires
$$\sum_{i=1}^n 2i-2=n^2-n \text{ inner products}. $$
The normalization of the basis elements needs $n$ more inner products. Let $\{b_1,\dots,b_n\}$ be an orthonormal basis for $W$. The projection $\mathfrak g$ of $\mathfrak v$ onto $W$ can be computed as
$$\mathfrak g=\langle \mathfrak v,b_1\rangle b_1+\langle \mathfrak v,b_2\rangle b_2+\dots+\langle \mathfrak v,b_n\rangle b_n.$$

Overall, the number of inner products for computing projection of $\mathfrak v$ onto $W$ is bounded by $O(n^2)$ where again $n$ stands for the dimension of $W$. Since the cost of the overall operation for producing the projection vector is dominated by the inner product computation, the running time of the algorithm is bounded by $O(n^2)$ inner products. \\
\indent The second way to compute projection of a vector $\mathfrak v$ on $W$ is the following:
Let $\mathfrak g=a_1b_1+\dots+a_nb_n$ be the projection of $\mathfrak v$ onto $W$ where $a_i$ lies in the base field $\in \mathbb F$. The coefficients $a_i\in \mathbb F$ can be found by solving the linear system:
$$\begin{array}{ccccccc}
	\langle \mathfrak v, b_1\rangle& =&\langle \mathfrak g,b_1 \rangle&=& \langle a_1b_1+\dots+a_nb_n ,b_1 \rangle\\ 
	\langle \mathfrak v, b_2\rangle& =&\langle \mathfrak g,b_2 \rangle&=& \langle a_1b_1+\dots+a_nb_n ,b_2 \rangle\\ 
	\cdot & \cdot & \cdot \\
	\cdot & \cdot & \cdot \\
	\cdot & \cdot & \cdot \\
	\langle \mathfrak v, b_n\rangle& =&\langle \mathfrak g,b_n \rangle&=& \langle a_1b_1+\dots+a_nb_n ,b_n \rangle.\\ 
\end{array}$$

The  number of inner products on the right side of the above system is $$n+(n-1)+(n-2)+\dots+2+1=\dfrac{n^2+n}{2}$$
\indent Overall, $(n^2+3n)/2$ inner product computations should be performed. Even though the number of inner products is asymptotically again $O(n^2)$ in this direction, the real-time implementation might be more efficient in certain cases. In fact, the computations to solve linear systems can be performed  in parallel which allows exploiting a multi-core environment. As most of the systems, including IoT devices, use multi-core processors to their hardware architecture, solving linear systems instead of applying the Gram-Schmidt orthogonalization process might be less costly for some devices. Compared to other group authentication schemes, the proposed method of authentication and key exchange methods do not depend on the number of users in the groups. As mentioned above, group authentication schemes can be classified into two generations based on the mathematical tools they use. The first generation schemes use polynomial interpolation and multiplicative subgroups of finite fields. The second generation schemes use polynomial interpolation and elliptic curves over finite fields. Elliptic curve cryptography offers higher security and efficiency than finite field cryptography. The first generation group authentication scheme starts with interpolating points received from the users joining an authentication process. If the group manager employs a polynomial $f(x)$ of degree $n$, the number of users in an authentication process should be at least  $n+1$. The number of users in an authentication process  might be much larger than $n+1$. Let $k$ be the number of users joining the authentication phase then each user needs to perform at least $O(k^2)$ operation for the polynomial interpolation where $k>n$. The second generation group authentication schemes needs to perform group operation in an elliptic curve over a finite field. In other words, each user $U_i$ needs to compute $$C_i=\left(\prod_{r=1,r\not=i}^{k} \frac{-x_r}{x_i - x_r}\right)f(x_i)P.$$ In the first part, each user performs $k$ inversion and multiplication operation in the finite field, $\mathbb F_p$, then needs to find a multiple of a point $P$ in the elliptic curve $E(\mathbb F_p)$. Again, as above the number $k$ should always be larger than the degree of $f(x)$. Therefore, each user performs at least $O(k)$ operation including inversion and multiplication before computing a multiple of $P \in E(\mathbb F_p)$. Then each user needs sum up the points received from others joining the authentication phase. The overall cost is then $O(ck\log^3 p)$ where $c$ is a constant larger than $\log p$. We should note here than an addition in the elliptic curve group, $E(\mathbb F_p)$ takes about $O(\log^3 p)$ operation \cite{Frey}. The proposed method requires each member performing projection operation regardless of number of users joining the authentication phase. The overall cost to each user joining the phase is about $O(n^2)$ where $n$ is the dimension of the selected subspace and $n$ can be chosen a small number to reduce the cost in an authentication operation.  \\
\indent  An additional cost factor for group authentication schemes is the number of communication channels that must be established between each peer and the others during the authentication or key establishment phases. In the scenario we propose, either a member or the group manager disseminates a single piece of data, which each incoming member utilizes to authenticate their identity or derive the group’s secret. Conversely, authentication schemes that rely on polynomial interpolation necessitate that each user acquire the computational outcomes from all other users. Essentially, these schemes mandate that every member access the public data of their peers, execute calculations, and disseminate their computational findings to every new participant in the session. Table \ref{demo-table} provides this overview, where it presents the number of communication channels needed to be established for each member when $m$ number of members are joining the session.

\begin{table}[tb]
	\caption{ The number of communication channels that a single peer must establish when $m$ peers are active in the authentication or key establishment phase.}
	\begin{center}
		\begin{tabular}{ | p{1.8cm}  | p{1.8cm}  |p{1.8cm}  | p{1.8cm} |}
			\hline  
			Phase   &  1st Gen &   2nd Gen &   Proposed \\ \hline
			Key establishment &   $2(m-1)$  &   $2(m-1)$ &   $1$  \\ \hline
			Authentication  &   $2(m-1)$  &   $2(m-1)$ &   $1$  \\ \hline
		\end{tabular}\label{demo-table} 
	\end{center}
\end{table}

We have conducted real-time tests to give an insight into the practicality of our authentication scheme. The real-time tests also include first and second-generation group authentication schemes.    Real-time testing was performed on two distinct systems. The initial series of tests were executed on a computer equipped with the Linux operating system, powered by an Intel i7-11370H processor and bolstered by 32 GB of RAM. In parallel, we have conducted a comprehensive series of tests, expanding the dimension options on a Raspberry Pi 4B. This device is equipped with 8 GB of RAM and a 1.5 GHz Quad-Core 64-bit Arm Cortex A72 CPU, providing robust performance for our computational needs.

\begin{figure}[H] \label{graph1}
	\centering
	\begin{tabular}{c}
		\includegraphics[width=0.37\textwidth]{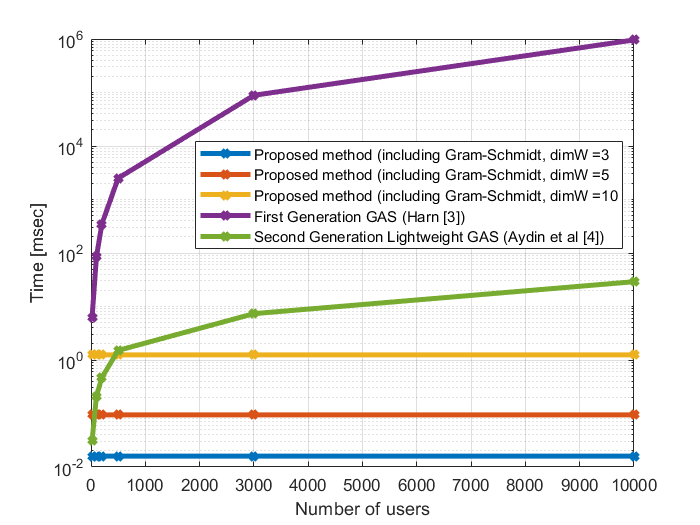}\\[\abovecaptionskip]
		\footnotesize Figure 4a. Results on PC
	\end{tabular}
		
 	\begin{tabular}{c}
 	\includegraphics[width=0.39\textwidth]{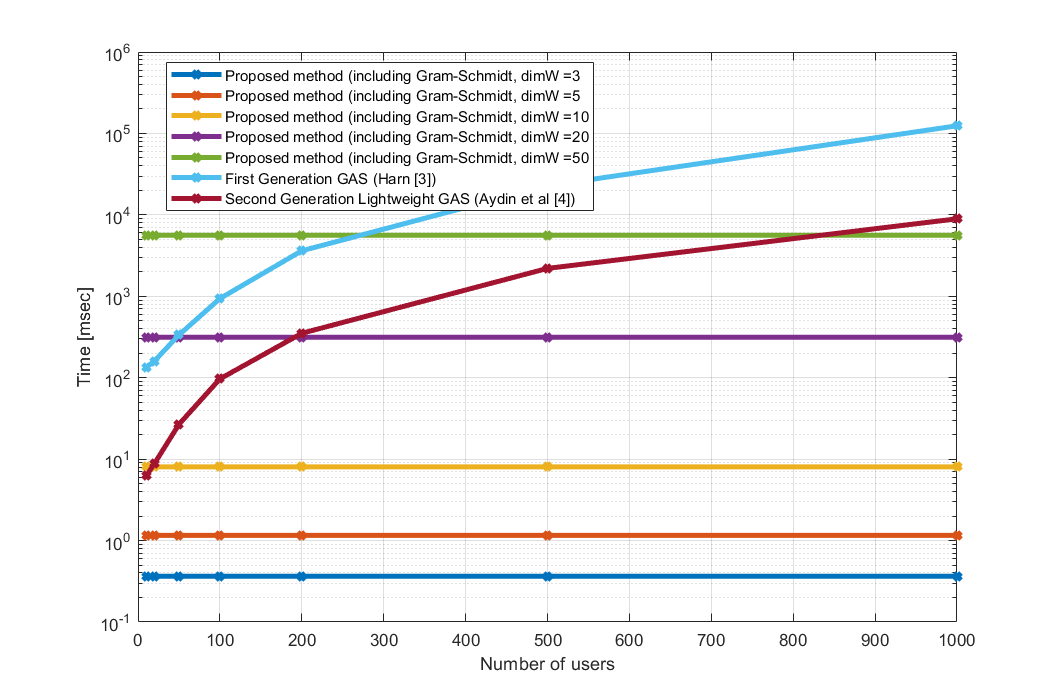}\\[\abovecaptionskip]
 	\footnotesize Figure 4b. Results on RP 4B
 \end{tabular}

	\caption{Comparison of the proposed method with the first and the second generation GASs. The graph represents the computational cost for each member, excluding communication costs. The top graph displays the outcomes of the test conducted on a computer, while the subsequent graph presents the results for the Raspberry Pi 4 Model B. }
\end{figure}

 Group members are required to derive an orthonormal basis from their designated keys, which will serve as a foundation for any subsequent authentication or key establishment protocols. Therefore, we have also performed tests that omit the process of extracting an orthonormal basis. The outcomes of these tests are systematically presented in Figure 5.

\begin{figure}[H] \label{graph2}

\centering
\begin{tabular}{c}
	\includegraphics[width=0.37\textwidth]{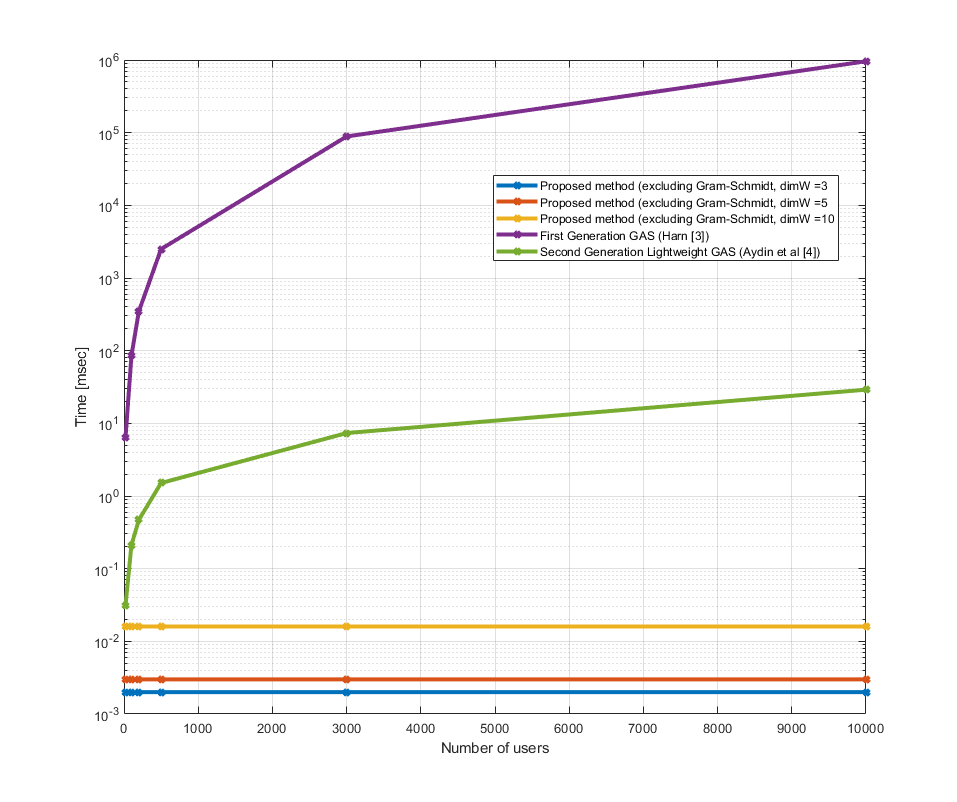}\\[\abovecaptionskip]
	\footnotesize Figure 5a. Results on PC
\end{tabular}

\begin{tabular}{c}
	\includegraphics[width=0.39\textwidth]{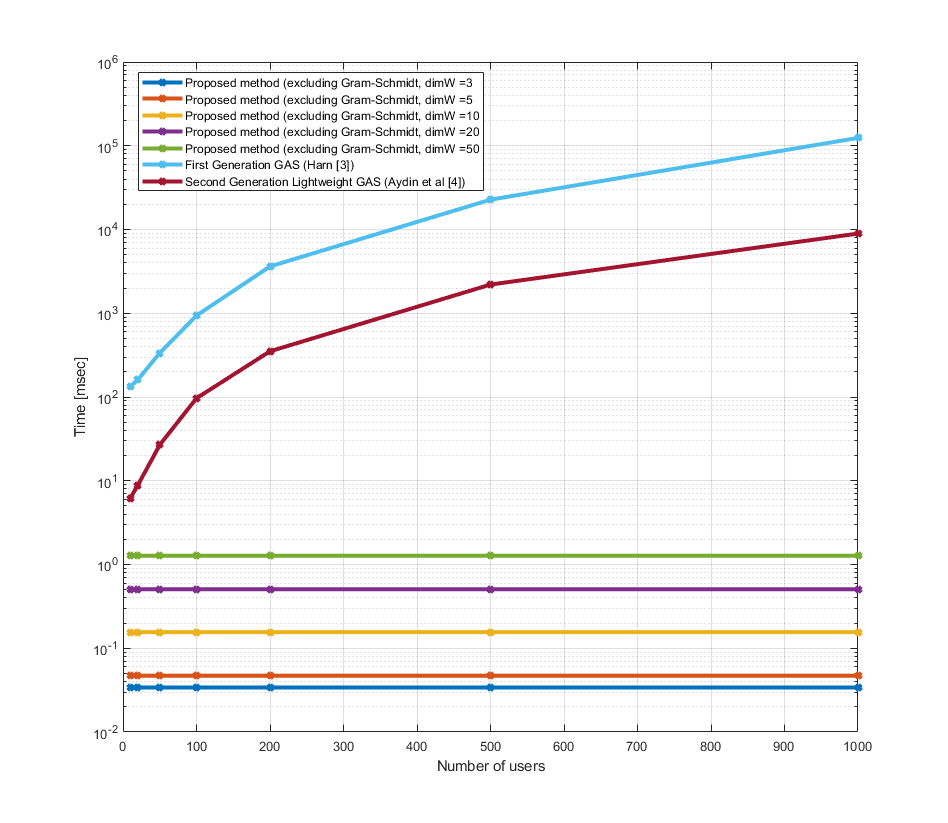}\\[\abovecaptionskip]
	\footnotesize Figure 5b. Results on RP 4B
\end{tabular}
	\caption{Comparison of the proposed method with Harn's and the second generation GASs.  The graph represents the computational cost for each member, excluding communication costs. The computational cost of the proposed algorithm excludes GM orthoganalization process as it is enough to perform it only one time. Again, the top graph displays the outcomes of the test conducted on a computer, while the subsequent graph presents the results for the Raspberry Pi 4 Model B }
\end{figure}

\indent A scheme described in the first and generation group authentication schemes are classified as $(t,m,n)$ scheme where
\begin{enumerate}
	\item $t$ is the minimum number of users required in an authentication process.
	\item   $m\ge t$ is the number of users joining the authentication process
	\item $n$ is the number of users in the group.
\end{enumerate} 
We observed that the running time is independent of the number $t$, which is the degree of the group manager’s secret function,  for the methods based on polynomial interpolation. However, the security parameter is proportional to $t$. The running time of both the first and second-generation schemes is linear in the number of users $m$ who participate in the authentication process. We implemented all algorithms in the same environment where we use a C++ library, PARI/GP \cite{PARI}, for operations involving high-precision numbers. We also note here that our scheme's cost to users is independent of the number of users joining the authentication and key establishment process and we select $W$ as a subspace of $\mathbb R^{n}$ where the inner product is just the dot product.   For the first generation GAS algorithm, we use a prime field of size 2048 bits, whereas, for the second generation, we use a prime field of size 224 bits. In this case, both schemes give almost the same security level, which is about 112 bits, as the first generation is based on the discrete logarithm problem for a multiplicative subgroup of a finite field, whereas the second generation is based on the discrete logarithm problem on an elliptic curve group. Besides having a nearly twofold running time advantage, the second generation algorithms also benefit from performing operations in a smaller field size, which may offer a significant memory advantage.

\section{Conclusion and  Future Work}
\subsection{Drawbacks}
The proposed method is suitable for various applications, such as swarms of drones in the military or civic sector. The scheme is the first approach that uses inner product spaces for authentication and key establishment among a group of devices. The main motivation to seek a new mathematical tool for this construction is to overcome the issue inherent to polynomial interpolation. That is, if a single device acts maliciously, it can disrupt either the authentication or the key establishment process. The proposed scheme does not have this issue, but it imposes a memory burden on each member of the group. That is, the secret of each member has a larger size compared to members in other group authentication schemes. In addition to this, it has also some  limitations that can be improved:
	\begin{enumerate}
	\item A subgroup of users may require a confidential channel among themselves. This can only be achieved by combining the proposed group authentication and a classical key establishment method, such as the Diffie-Hellman key exchange.
	\item  Within a group, a member may cause harm by adding unauthorized users. This can be achieved by constructing a random basis for the group subspace and distributing it to these unauthorized users. As a result, these users can seamlessly join group conversations, and their intrusion can only be identified by the group manager.
	\item The group members may need to revoke the membership of malicious users. This can only be done by re-generating the keys and re-establishing the group.
	\item  The proposed scheme enables any member to add a non-member to the group. This is based on the assumption that group membership is permanent, as in the scenario of drones conducting surveillance. However, revocation may be necessary in different situations and should be addressed in the future work of group authentication. On the other hand, at this stage the only way to remove the member credentials is re-distributing new credentials or establishing a new group.
\end{enumerate}

\subsection{Future Plan}
Group authentication is a novel idea that has the potential to be applied to various environments and applications. The proposed group authentication method is computationally lightweight and enables any member to authenticate other group members and establish a secret key among them. The proposed algorithm uses inner products to create a new group authentication method that aims to provide both the desired security level and a low computational load.

As the first group authentication algorithm with inner product spaces, the method can be further improved to provide group handover schemes for near-future mobile base stations. Considering the near-future IoT networks, which are expected to be autonomous and decentralized, each user should be able to introduce their trusted partners to the groups they belong to. In this respect, the research to create a secure and reliable system should take into account such a demand. In some cases, a new member who is not introduced by the group manager may need to be excluded from some group conversations until they are registered by the group manager, and future work should aim to design such an algorithm. 

The group authentication and key establishment schemes are suitable for decentralized networks. In these networks, some group members may need to exchange confidential data among themselves, where both the group membership authentication and the individual authentication are necessary. Such flexibility should be incorporated into the current method in future work. While conducting real-time tests, we also observe that there is room to improve the first and second-generation schemes. Future work that focuses on reducing the real-time cost should combine the mathematical tools in all three generations of group authentication schemes.

\section*{Biography Section}
\begin{IEEEbiography}[{\includegraphics[width=1in,height=1.25in,clip,keepaspectratio]{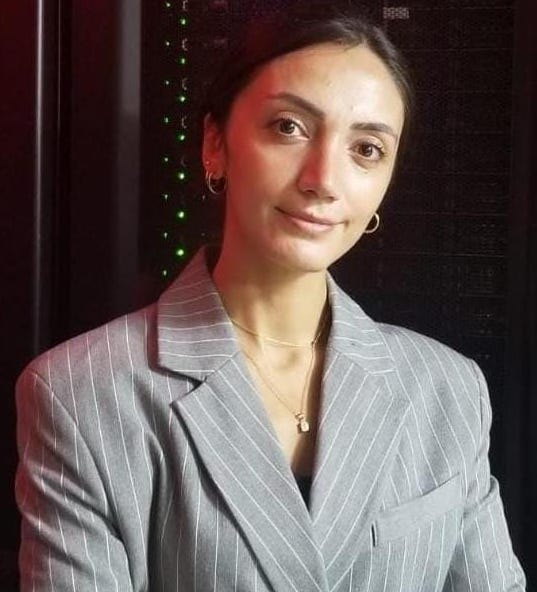}}]{Sueda Guzey} 
	(Student Member, IEEE) received BS in Mathematics from Middle East Technical University, Turkey in 2017. She is currently pursuing a PhD degree in the Cyber Security Engineering and Cryptography Program at Institute of Informatics, Istanbul Technical University.   Her current research interests include Cryptography, Cyber Security, Information Security, Number Theory, and Network Security.
	
\end{IEEEbiography}

\begin{IEEEbiography}[{\includegraphics[width=1in,height=1.25in,clip,keepaspectratio]{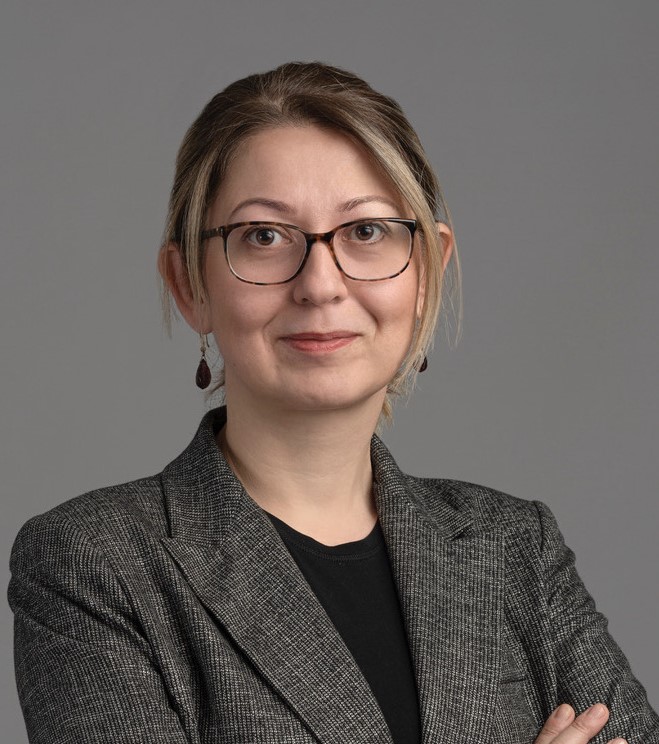}}]{Gunes Karabulut Kurt} is currently an Associate Professor of Electrical Engineering at Polytechnique Montréal, Montréal, QC, Canada. She is a Marie Curie Fellow and has received the Turkish Academy of Sciences Outstanding Young Scientist (TÜBA-GEBIP) Award in 2019. In addition, she is an adjunct research professor at Carleton University. She is also currently serving as an Associate Technical Editor (ATE) of the IEEE Communications Magazine and a member of the IEEE WCNC Steering Board. She is the chair of the IEEE special interest group entitled “Satellite Mega-constellations: Communications and Networking”. Her research interests include space information networks, satellite networking, wireless network coding, wireless security,  space security, and wireless testbeds.
	
\end{IEEEbiography}

\begin{IEEEbiography}[{\includegraphics[width=1in,height=1.25in,clip,keepaspectratio]{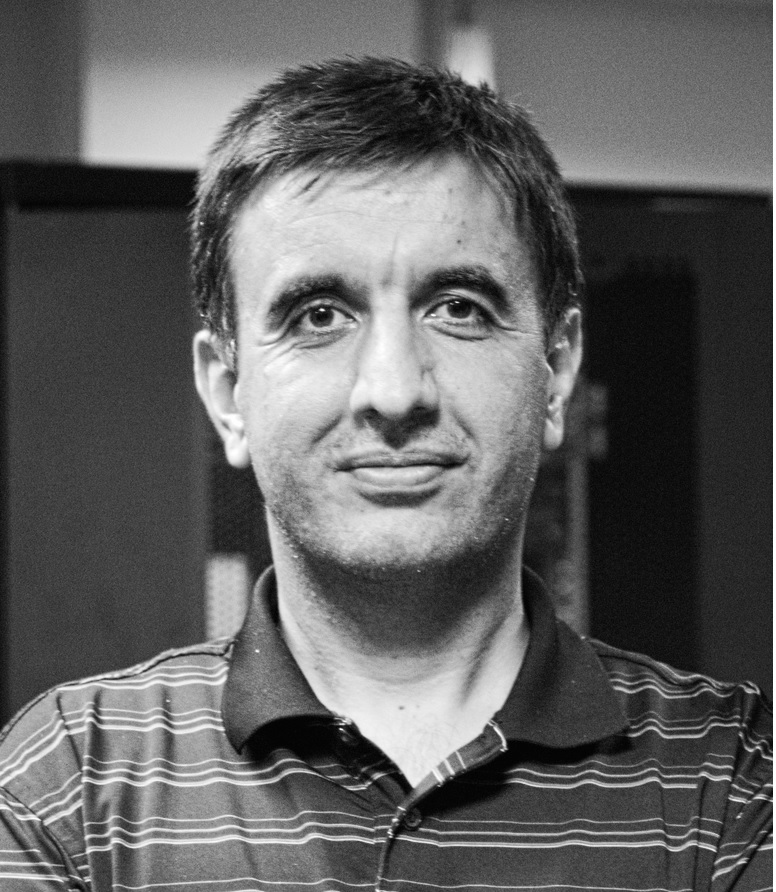}}]{Enver Ozdemir}
	is currently a Professor at Informatics Institute at Istanbul Technical University. He obtained his PhD degree in Mathematics from University of Maryland
	College Park in 2009. He was a member of the Coding Theory and Cryptography Research Group (CCRG), Nanyang Technological University, Singapore from 2010-2014. He is also the deputy director of the National Center for High Performance Computing (UHeM). His research interests include Cryptography, Computational Number Theory and Network Security.
\end{IEEEbiography}

\bibliography{references}

\begin{thebibliography}{10}
\providecommand{\url}[1]{#1}
\csname url@samestyle\endcsname
\providecommand{\newblock}{\relax}
\providecommand{\bibinfo}[2]{#2}
\providecommand{\BIBentrySTDinterwordspacing}{\spaceskip=0pt\relax}
\providecommand{\BIBentryALTinterwordstretchfactor}{4}
\providecommand{\BIBentryALTinterwordspacing}{\spaceskip=\fontdimen2\font plus
\BIBentryALTinterwordstretchfactor\fontdimen3\font minus
  \fontdimen4\font\relax}
\providecommand{\BIBforeignlanguage}[2]{{%
\expandafter\ifx\csname l@#1\endcsname\relax
\typeout{** WARNING: IEEEtran.bst: No hyphenation pattern has been}%
\typeout{** loaded for the language `#1'. Using the pattern for}%
\typeout{** the default language instead.}%
\else
\language=\csname l@#1\endcsname
\fi
#2}}
\providecommand{\BIBdecl}{\relax}
\BIBdecl

\bibitem{demir2020garden}
M.~O. Demir, A.~E. Pusane, G.~Dartmann, G.~Ascheid, and G.~Karabulut~Kurt, ``A
  garden of cyber physical systems: {Requirements,} challenges, and
  implementation aspects,'' \emph{IEEE Internet of Things Magazine}, vol.~3,
  no.~3, pp. 84--89, 2020.

\bibitem{SSL}
T.~Elgamal and K.~E. Hickman, ``{Secure Socket Layer Application Program
  Apparatus and Method},'' Aug.~12 1997, {US} Patent 5,657,390.

\bibitem{PGP}
S.~Garfinkel, \emph{{PGP: Pretty Good Privacy}}.\hskip 1em plus 0.5em minus
  0.4em\relax O'Reilly Media, Inc., 1995.

\bibitem{9993704}
R.~Niboucha, S.~B. Saad, A.~Ksentini, and Y.~Challal, ``Zero-touch security
  management for {mMTC} network slices: {DDoS} attack detection and
  mitigation,'' \emph{IEEE Internet of Things Journal}, vol.~10, no.~9, pp.
  7800--7812, 2023.

\bibitem{Harn}
L.~Harn, ``Group {Authentication},'' \emph{IEEE Trans. on Computers}, vol.~62,
  no.~9, pp. 1893--1898, 2012.

\bibitem{YA}
Y.~Aydin, G.~K. Kurt, E.~Ozdemir, and H.~Yanikomeroglu, ``A flexible and
  lightweight group authentication scheme,'' \emph{IEEE Internet of Things
  Journal}, vol.~7, no.~10, pp. 10\,277--10\,287, 2020.

\bibitem{SSS}
A.~Shamir, ``{How to Share a Secret},'' \emph{Communications of the ACM},
  vol.~22, no.~11, pp. 612--613, 1979.

\bibitem{PARI}
C.~Batut, K.~Belabas, D.~Bernardi, H.~Cohen, and M.~Olivier, \emph{User's Guide
  to PARI-GP}.\hskip 1em plus 0.5em minus 0.4em\relax Universit{\'e} de
  Bordeaux I, 2000.

\bibitem{ddos}
M.~Patrikakis and Zouraraki, ``{Denial of Service Attacks},'' \emph{Internet
  Protocol Journal}, vol.~4, no.~7, pp. 13--25, 2004.

\bibitem{IoTSoc}
B.~Guo, Z.~Yu, X.~Zhou, and D.~Zhang, ``{Opportunistic IoT: Exploring the
  Social Side of the Internet of Things},'' in \emph{IEEE Int. Conf. on
  Computer Supported Cooperative Work in Design (CSCWD)}, 2012, pp. 925--929.

\bibitem{DH}
W.~Diffie and M.~Hellman, ``{New Directions in Cryptography},'' \emph{IEEE
  Trans. on Info. Theory}, vol.~22, no.~6, pp. 644--654, 1976.

\bibitem{ElGamal}
T.~ElGamal, ``{A Public Key Cryptosystem and a Signature Scheme Based on
  Discrete Logarithms},'' \emph{IEEE Trans. on Info. Theory}, vol.~31, no.~4,
  pp. 469--472, 1985.

\bibitem{RSA}
R.~L. Rivest, A.~Shamir, and L.~Adleman, ``{A Method for Obtaining Digital
  Signatures and Public Key Cryptosystems},'' \emph{Communications of the ACM},
  vol.~21, no.~2, pp. 120--126, 1978.

\bibitem{maninthemiddle}
R.~C. Merkle and M.~E. Hellman, ``{On The Security of Multiple Encryption},''
  \emph{Communications of the ACM}, vol.~24, no.~7, pp. 465--467, 1981.

\bibitem{healtcaresystem}
Y.~Park and Y.~Park, ``{A Selective Group Authentication Scheme for {IoT}-Based
  Medical Information System},'' \emph{Journal of Medical Systems}, vol.~41,
  no.~4, p.~48, 2017.

\bibitem{graycode}
M.~Deepika and A.~Sreekumar, ``{Secret Sharing Scheme Using Gray Code and XOR
  Operation},'' in \emph{Int. Conf. on Electrical, Computer and Comm. Tech.
  (ICECCT)}, 2017, pp. 1--5.

\bibitem{vandermonde}
C.-F. Hsu, L.~Harn, Y.~Mu, M.~Zhang, and X.~Zhu, ``{Computation-Efficient Key
  Establishment in Wireless Group Communications},'' \emph{Wireless Networks},
  vol.~23, no.~1, pp. 289--297, 2017.

\bibitem{CRT}
C.~Asmuth and J.~Bloom, ``{A Modular Approach to Key Safeguarding},''
  \emph{IEEE Trans. on Info. Theory}, vol.~29, no.~2, pp. 208--210, 1983.

\bibitem{mahallethreshold}
P.~N. Mahalle, N.~R. Prasad, and R.~Prasad, ``{Threshold Cryptography-Based
  Group Authentication (TCGA) Scheme for the Internet of Things (IoT)},'' in
  \emph{Int. Conf. on Wireless Comm., Vehicular Technology, Information Theory
  and Aerospace \& Electronic Systems (VITAE)}, 2014, pp. 1--5.

\bibitem{ML}
S.~Das, G.~Laput, C.~Harrison, and J.~I. Hong, ``{Thumprint: Socially-Inclusive
  Local Group Authentication Through Shared Secret Knocks},'' in \emph{CHI
  Conf. on Human Factors in Computing Systems}, 2017, pp. 3764--3774.

\bibitem{dynamic}
S.~Gupta, B.~L. Parne, and N.~S. Chaudhari, ``{DGBES}: {Dynamic Group Based
  Efficient and Secure Authentication and Key Agreement Protocol for} {MTC} in
  {LTE/LTE-A} {Networks},'' \emph{Wireless Personal Comm.}, vol.~98, no.~3, pp.
  2867--2899, 2018.

\bibitem{24}
C.~Lai, R.~Lu, D.~Zheng, H.~Li, and X.~S. Shen, ``{GLARM: Group-Based
  Lightweight Authentication Scheme for Resource-Constrained Machine-to-Machine
  Communications},'' \emph{Computer Networks}, vol.~99, pp. 66--81, 2016.

\bibitem{gase}
M.~Nakkar, R.~AlTawy, and A.~Youssef, ``Gase: A lightweight group
  authentication scheme with key agreement for edge computing applications,''
  \emph{IEEE Internet of Things Journal}, vol.~10, no.~1, pp. 840--854, 2022.

\bibitem{MTC}
J.~Li, M.~Wen, and T.~Zhang, ``Group-based authentication and key agreement
  with dynamic policy updating for mtc in lte-a networks,'' \emph{IEEE Internet
  of Things Journal}, vol.~3, no.~3, pp. 408--417, 2015.

\bibitem{collaboration}
H.~Fang, Z.~Xiao, X.~Wang, and N.~Al-Dhahir, ``Lightweight flexible group
  authentication utilizing historical collaboration process information,''
  \emph{IEEE Transactions on Communications}, vol.~7, no.~4, pp. 2260--2273,
  2023.

\bibitem{drones}
C.~Pu, C.~Warner, K.-K.~R. Choo, S.~Lim, and I.~Ahmed, ``litegap: Lightweight
  group authentication protocol for internet of drones systems,'' \emph{IEEE
  Transactions on Vehicular Technology}, vol.~73, no.~4, pp. 5849--5860, 2024.

\bibitem{cross_domain}
L.~Xue, H.~Huang, F.~Xiao, and W.~Wang, ``A cross-domain authentication scheme
  based on cooperative blockchains functioning with revocation for medical
  consortiums,'' \emph{IEEE Transactions on Network and Service Management},
  vol.~19, no.~3, pp. 2409--2420, 2022.

\bibitem{eigenfaces}
P.~N. {Belhumeur}, J.~P. {Hespanha}, and D.~J. {Kriegman}, ``{Eigenfaces vs.
  {Fisherfaces}}: {Recognition Using Class Specific Linear Projection},''
  \emph{IEEE Transactions on Pattern Analysis and Machine Intelligence},
  vol.~19, no.~7, pp. 711--720, 1997.

\bibitem{orthogonalprojection}
S.~Mohammadzadeh and O.~Kukrer, ``{Modified Robust {Capon Beamforming with
  Approximate Orthogonal Projection onto the Signal-Plus-Interference
  Subspace}},'' \emph{Circuits, Systems, and Signal Processing}, vol.~37,
  no.~12, pp. 5351--5368, 2018.

\bibitem{Kincaid}
{D. {Kincaid} and W. {Cheney}}, \emph{{Numerical Analysis: Mathematics of
  Scientific Computing.}}, 2nd~ed.\hskip 1em plus 0.5em minus 0.4em\relax
  {American Mathematical Society}, 2009.

\bibitem{GS}
W.~Hoffmann, ``{Iterative Algorithms for {Gram-Schmidt} Orthogonalization},''
  \emph{Computing}, vol.~41, no.~4, pp. 335--348, 1989.

\bibitem{Frey}
H.~Cohen, G.~Frey, R.~Avanzi, C.~Doche, T.~Lange, K.~Nguyen, and
  F.~Vercauteren, \emph{Handbook of elliptic and hyperelliptic curve
  cryptography}.\hskip 1em plus 0.5em minus 0.4em\relax CRC press, 2005.

\end{thebibliography}
\bibliographystyle{IEEEtran}

\newpage

\vfill

\end{document}